\theoremstyle{plain} 
\newtheorem{thm}{Theorem}
\newtheorem{prop}{Proposition}
\theoremstyle{definition}
\theoremstyle{remark} 
\newtheorem*{pralg}{Predictive Recursion Algorithm}
\newcommand{\unif}{{\sf Unif}}
\newcommand{\bet}{{\sf Beta}}
\newcommand{\RR}{\mathbb{R}}
\newcommand{\XX}{\mathbb{X}}
\newcommand{\YY}{\mathbb{Y}}
\newcommand{\FF}{\mathbb{F}}
\renewcommand{\S}{\mathcal{S}}
\renewcommand{\SS}{\mathbb{S}}
\title{Estimating a mixing distribution on the sphere using predictive recursion}
\author{Vaidehi Dixit\footnote{Department of Statistics, North Carolina State University; {\tt vdixit@ncsu.edu}, {\tt rgmarti3@ncsu.edu}} \quad and \quad Ryan Martin$^*$}
\date{\today}
\begin{document}

\maketitle 

\begin{abstract}
Mixture models are commonly used when data show signs of heterogeneity and, often, it is important to estimate the distribution of the latent variable responsible for that heterogeneity.  This is a common problem for data taking values in a Euclidean space, but the work on mixing distribution estimation based on directional data taking values on the unit sphere is limited.  In this paper, we propose using the predictive recursion (PR) algorithm to solve for a mixture on a sphere.  One key feature of PR is its computational efficiency.  Moreover, compared to likelihood-based methods that only support finite mixing distribution estimates, PR is able to estimate a smooth mixing density.  PR's asymptotic consistency in spherical mixture models is established, and simulation results showcase its benefits  compared to existing likelihood-based methods.  We also show two real-data examples to illustrate how PR can be used for goodness-of-fit testing and clustering. 

\smallskip

\emph{Keywords and phrases:} Directional data; EM algorithm; marginal likelihood; mixture model; von Mises--Fisher distribution.
\end{abstract}

\section{Introduction}

Directional data occur in the natural sense like directions of fibres in a material or in the manufactured sense when observed data vectors are normalized. Such kind of data need to be modeled with special kind of distributions which are supported on compact manifolds. Significant contributions to this field appeared from 1950s onward as mentioned in \citet{fisher1993}. Various directional distributions and corresponding statistical methods have been developed.  However, these methods fall short when directional data cannot be adequately modeled by a single distribution on the sphere.  For example, diffusion MRI data may reveal clusters of fascicles of fibers in brain tissue pointing in generally different directions, and \citet{shakya2017} and \citet{yan2018} have used mixtures to accommodate this heterogeneity.  Similarly, \citet{franke2016} have proposed the use of mixture distributions to model directional data from fiber composites with the goal of estimating underlying fiber directions.  These references rely primarily on classical statistical methods for fitting finite mixture models for directional data.  The goal of this paper is to develop methods for nonparametric estimation of a general mixing distribution having a smooth density function supported on an unit sphere.




To set the scene, suppose we have $Y^n = (Y_1,\ldots,Y_n)$ independent and identically distributed (iid) observations from a mixture $f(y)$,
\begin{equation}
\label{eq:mixture}
f(y) = f_\Psi(y) = \int_\XX k(y \mid x) \, \Psi(dx), \quad y \in \YY, 
\end{equation}
where $k$ is a known kernel, i.e., $y \mapsto k(y \mid x)$ is a density for each $x \in \XX$, and $\Psi$ is an unknown mixing distribution, a probability measure defined on (a sigma-algebra of subsets of) the space $\XX$.  The goal is to estimate $\Psi$.  

For the mixture in \eqref{eq:mixture}, the most common mixture model problem is when the mixing distribution $\Psi$ is {\em discrete}, i.e., where $\Psi$ is assumed to have a finite (or perhaps countable) support. For these finite mixtures, the problem is reduced to estimation of component proportions and corresponding parameters.  In this case, the standard estimation procedure is maximum likelihood, often via the expectation--maximization (EM) algorithm; see, for example, \citet{dempsteretal1977}, \citet{mclachlanpeel2000}, and \citet{teeletal2015}. Bayesian approaches have also been explored where using a prior for $\psi(x)$ and a data-augmentation strategy \citep{vm2001} an EM-like method is used. A natural scenario is introducing a penalty or a prior for the unknown number of mixture components like in \citet{leroux1992} and \citet{richardsongreen1997}.

For the general case where $\Psi$ in \eqref{eq:mixture} need not be discrete, maximum likelihood and Bayesian estimation is still possible.   The nonparametric maximum likelihood estimation explained in \citet{lindsay1995} is one such approach. \citet{escobar1995} use a Dirichlet process prior for the mixing distribution in a normal mixture model, resulting in the now-common Dirichlet process mixture of normals model. However, while these procedures do not directly assume $\Psi$ to be discrete, the estimators end up being discrete, or effectively so.  Indeed, \citet[][Theorem~20]{lindsay1995} shows that the nonparametric maximum likelihood estimator of $\Psi$ is almost surely discrete, and the posterior mean of $\Psi$ under the Dirichlet process mixture model tends to be too rough \citep[cf.,][Fig.~1]{tmg} and not intended for estimating a smooth density function.

If one genuinely desires a smooth estimator of the mixing distribution, then the choices are limited.  One option is to suitably ``smooth out'' the nonparametric maximum likelihood estimator, either by manually smoothing \citep[e.g.,][]{eggermont1995} or by introducing a roughness penalty \citep[e.g.,][]{liuetal2009}.  Another option is the {\em predictive recursion} (PR) algorithm first developed by Michael Newton and co-authors \citep[e.g.,][]{newtonzhang1999, newtonetal1998, newton2002}.  This algorithm was motivated by some interesting connections to the Dirichlet process mixtures, and the estimator's key features are its speed, ease of computation, and ability to estimate a mixing density with respect to any user-specified dominating measure.  Theory was developed for the PR estimator later by \citet{ghoshtokdar2006}, \citet{martinghosh2008}, \citet{tmg}, and \citet{martintokdar2009}; extensions and PR-based methods have been proposed in \citet{martintokdar2011}, \citet{martintokdar2012} \citet{martinhan2016}, \citet{tansey2018} and \citet{scott2015}. Despite all of these efforts, to our knowledge, the PR algorithm or, more generally, mixing density estimation has not been investigated in the literature on directional data. This paper aims to address some of the challenges associated with mixtures on a sphere. Estimation via PR involves integrating over the support of the mixing density, which in this case is a sphere. This poses a computational challenge as explored in \ref{SS:prcomp}. We also discuss consistency of the PR estimate on a sphere and verification of the required assumptions.

We organize the rest of the paper as follows. Section~\ref{S:background} gives a background on spherical distributions, their properties and estimation in mixtures. The proposed method of using PR on a sphere is given in Section~\ref{S:pr}. We compare the mixture estimates with those obtained from EM via simulations in Section~\ref{simulations}. Section~\ref{S:extensions} illustrates, using real data, how the PR-based methodology can be used for goodness-of-fit testing and for clustering surface normals (three-dimensional unit vectors) to reconstruct images. We conclude, in Section~\ref{S:discuss}, with a summary and a brief discussion of some open problems.




\section{Background}
\label{S:background}

\subsection{Distributions on the sphere}
\label{SS:sphere.dist}

In most applications, a random vector has a density function with respect to a product measure, e.g., Lebesgue measure, on the sample space.  However, for data taking values on a sphere, $\SS$, there is a constraint on the sample space that makes it impossible to define a density with respect to a product measure.  For our present application, we use the surface area measure, denoted by $\sigma$, which coincides with the Haar measure on the special orthogonal group of rotations on the three-dimensional sphere.    

The simplest way to describe the surface area measure is via a change-of-variables operation.  Any point on the sphere can be expressed in spherical coordinates.  Identify a point $x=(x^{(1)},x^{(2)},x^{(3)})^\top$ on the unit sphere $\SS \subset \RR^3$ with its polar coordinates $(\theta,\phi) \in [0,\pi] \times [0,2\pi)$ according to the ISO convention.  That is, $\theta$ represents the polar angle between $x$ and the unit vector in the $x^{(3)}$ direction, and $\phi$ is the azimuthal angle between the projection of $x$ onto the $x^{(3)}=0$ plane with the unit vector in the $x^{(1)}$ direction.  The relation between $x=x_{\theta,\phi}$ and $(\theta,\phi)$ can most easily be expressed according to the rule 
\[ x_{\theta,\phi}^{(1)} = \sin\theta \cos\phi, \quad x_{\theta,\phi}^{(2)} = \sin\theta\sin\phi, \quad x_{\theta,\phi}^{(3)}= \cos\theta. \]
It is easy to see that the Jacobian of this transformation is $(\theta,\phi) \mapsto \sin\theta$, so for any $\sigma$-integrable function $h$ on the sphere, integration is defined by 
\[ \int_\SS h(x) \, \sigma(dx) = \int_0^{2\pi} \int_0^\pi h(x_{\theta,\phi}) \sin\theta \, d\theta \, d\phi, \]
where the integration on the right-hand side is with respect to Lebesgue measure.  In what follows, when we define our kernels or mixing densities, we will use whichever is easier of the equivalent $x$ or $(\theta,\phi)$ definitions.  


Of course, the collection of all distributions on the sphere is virtually limitless, but not all of these would be natural choices for statistical models.  Two classes of distribution that have proved to be useful statistical models are the {\em vectorial} and {\em axial} distributions.  We discuss some axial and vectorial models below.

\subsubsection*{Vectorial distributions}

Most of the vectorial distributions defined in literature belong to the exponential family, and take the form $y \mapsto C e^{T(y,\eta)}$, $y \in \SS$, where $T$ is a given function, $\eta$ is a vector of parameters, and $C$ is the normalizing constant.  
von Mises--Fisher, \citet{kentfisher1982} and  \citet{wood1982} distributions are all applicable to vectorial data. The von Mises--Fisher distribution is an extension of the von Mises---or circular Gaussian---distribution on the unit circle to the unit sphere. It is characterized by two parameters: the vector location vector $\mu \in \SS$ and the scalar concentration parameter $\kappa > 0$.  The density is 
\begin{equation}
\label{eq:vMF}
k_\kappa(y \mid \mu) = \cfrac{\kappa^{1/2}}{(2\pi)^{3/2} I_{1/2}(\kappa)} \exp(\kappa \mu^\top y), \quad y \in \SS,
\end{equation}
where $\kappa \geq 0$ and $\mu \in \SS$; the normalizing constant, which depends on $\kappa$ but not $\mu$, is the modified Bessel function of the first kind, of order $1/2$.  This distribution is unimodal and rotationally symmetric about $\mu$, i.e., $y_{1}^\top \mu = y_{2}^\top\mu$ implies $k_\kappa(y_1 \mid \mu) = k_\kappa(y_2 \mid \mu)$. Because of its close ties to the normal distribution, the von Mises--Fisher distribution is one of the most commonly used to model spherical data.  

The Kent or the Fisher--Bingham distribution can be fit to a broader class of data than von Mises--Fisher. It is a five-parameter distribution, and can be fit to unimodal, bimodal or asymmetric data. 
The Wood distribution was specially designed for bimodal data with equal probability in the two modes. It is also characterized by five parameters that define the modal directions and the angle between the two modes. This distribution is symmetric about the plane passing through the two modes and the origin. For a detailed overview of these and other distributions, see \citet{fisher1993} and \citet{mardia2009}.





\subsubsection*{Axial distributions}

These distributions are used to model data that \textit{do not} have a preferred direction and are referred to as antipodally symmetric, i.e., $f(y) = f(-y)$ for all $y \in \SS$.  \citet{watson1965}, \citet{bingham1974}, and \citet{tyler1987} are all examples of such models. The Watson distribution is the earliest attempt in literature to model such data. It is characterized by two parameters $\mu$ and $\kappa$ where, like von Mises--Fisher, $\mu$ is the location parameter and $\kappa$ is the shape parameter. However, the interpretation of $\mu$ differs based on whether $\kappa$ is positive or negative. If $\kappa \geq 0$, it is a {\em bipolar} model where observations are clustered at the poles of the axis $\pm \mu$. Conversely if $\kappa \leq 0$, it is a {\em girdle} model where observations are spread around on the ring of the circular plane passing through the origin, and normal to $\pm \mu$, now called the polar axis. This distribution is also rotationally symmetric. The Bingham distribution is a five-parameter distribution. In addition to a bipolar/girdle property it can also represent rotationally asymmetric data. For details about inference in these distributions see \citet{fisher1993} and \citet{mardia2009}. Another class of axial distributions are the angular central Gaussian (ACG) distributions. These are distributions for directions on an ellipse $y^\top \Sigma^{-1} y = 1$, where $\Sigma$ is the parameter.  ACG distributions are not exponential families, they have a simpler normalizing constant, and their statistical properties have been extensively studied \citep{tyler1987}. The Schladitz distribution introduced in \citet{schladitz2006} is a special case of ACG and one we use in this paper. The density function for the Schladitz distribution is given by 
\begin{equation}
\label{schladitz}
    k_\beta(y \mid \mu) = \cfrac{1}{4 \pi {|\Sigma_{\mu, \beta}|}^{1/2}}\cfrac{1}{(y^\top \Sigma_{\mu, \beta}^{-1} y)^{3/2}}, \quad y \in \SS
\end{equation}
where $\Sigma_{\mu,\beta} = Q_{\mu}^\top D_\beta Q_{\mu}$, with  $D_\beta = \text{diag}(1,1,\beta^{-2})$ and $Q_{\mu}$ is the rotation matrix mapping $(0,0,1)^\top$ onto the unit vector $\mu$, given by  
\begin{equation}
\label{eq:Q}
    Q_\mu = Q_{\theta_0,\phi_0} = \begin{pmatrix} \cos{\theta_0}\cos{\phi_0} & -\sin{\phi_0} & \sin{\theta_0}\cos{\phi_0} \\ 
                        \cos{\theta_0}\sin{\phi_0} & \cos{\phi_0} & \sin{\theta_0}\sin{\phi_0} \\
                        -\sin{\theta_0} & 0 & \cos{\theta_0}\\ 
                        \end{pmatrix},
\end{equation}
with $(\theta_0, \phi_0)$ the spherical coordinates of $\mu$.  This distribution is bipolar axial for $\beta < 1$, uniform over the sphere for $\beta = 1$, and girdle for $\beta > 1$.

\subsection{Mixture distributions on the sphere}
\label{mixtures}


The distributions mentioned in the previous section are sufficient to fit data from a single population or a population with one or possibly two modes. However, a data set might show multiple dominant directions, in which case the above distributions would be inappropriate models. In the past, several mixture models have been suggested for directional data. Given a kernel density function $k$ for observed variable $Y$ a finite mixture model with $J$ many components is represented by 
\[ f(y) = \sum_{j=1}^{J} \psi_{j} \, k_{\lambda}(y \mid \mu_j), \]
where $\mu_j$ is the location parameter for the $j^\text{th}$ component and,with a slight abuse of notation, the corresponding weight $\psi_j = \psi(\mu_j)$. The common non-mixing parameter $\lambda$ accounts for any structural features of the density, like $\kappa$ for von Mises--Fisher or $\beta$ for Schladitz.
\citet{franke2016} propose fitting such a discrete mixture of the Schladitz distribution to data from fiber composites. Here, estimation of parameters is achieved through use of EM algorithm for the ACG distribution of which Schladitz is a special case. \citet{banerjee2005} use the EM algorithm in clustering data on a hypersphere by fitting a mixture of the von Mises--Fisher distribution. \citet{peel2001} used mixtures of the Kent distribution, a general version of the von Mises--Fisher to fit a model for fracture directions obtained from a rock mass. The EM algorithm is used to estimate its parameters. More recently, \citet{zhang2013} discusses the fitting of mixture models to data in material sciences with emphasis on EM algorithm and maximum likelihood estimators. Section~\ref{simulations} considers both Schladitz and von Mises--Fisher mixtures.

The EM algorithm is frequently used for fitting finite mixtures on a sphere. It efficiently uses maximum likelihood estimation for the purposes of estimating the component parameters. Execution of the same starts with some initial values. Based on the complexity of the kernel, the maximum likelihood estimates can be highly sensitive to the user-supplied starting values for the EM algorithm; see Section~\ref{simulations}.  This creates a serious practical problem which, in addition to the well-known challenges related to EM's slow convergence and the inability for likelihood-based methods to estimate mixing densities, motivates a different approach.

\section{Predictive recursion on the sphere} 
\label{S:pr}

\subsection{The algorithm and its properties}
\label{SS:algorithm}


Suppose $Y^n = (Y_1,\ldots,Y_n)$ are independent and identically distributed observations on the sphere having mixture distribution as in \eqref{eq:mixture}. The goal is to estimate the unknown mixing distribution $\Psi$.  For this, the PR algorithm of \cite{newtonetal1998}, \citet{newtonzhang1999}, and \citet{newton2002} proceeds as follows.  

\begin{pralg}
Initialize the algorithm with a guess $\Psi_0$ of the mixing distribution and a sequence $\{w_i: i \geq 1\} \subset (0,1)$ of weights.  Then recursively update 
\begin{equation}
\label{eq:pr}
\Psi_i(dx) = (1-w_i) \, \Psi_{i-1}(dx) + w_i \, \frac{k(Y_i \mid x) \Psi_{i-1}(dx)}{f_{i-1}(Y_i)}, \quad i=1,\ldots,n, 
\end{equation}
where $f_{i-1}(y) = \int k(y \mid x) \, \Psi_{i-1}(dx)$ is the mixture corresponding to $\Psi_{i-1}$.  Return $\Psi_n$ and $f_n$ as the final estimates.  
\end{pralg}


 


For some intuition about what this algorithm is doing and why it works, note that the ratio on the right-hand side of \eqref{eq:pr} is the Bayesian update based on a single observation $Y_i$, with likelihood $k(Y_i \mid x)$, and prior distribution $\Psi_{i-1}$.  So the PR algorithm is simply computing, at each iteration, a weighted average of the ``prior'' and corresponding single-observation ``posterior'' distributions.  In \cite{newtonetal1998} it was shown that PR's one-step update, $(\Psi_{i-1}, Y_i) \mapsto \Psi_i$, corresponds exactly to the Bayesian posterior mean based on a Dirichlet process mixture model, where the ``prior'' for the mixing distribution is a Dirichlet process with base measure $\Psi_{i-1}$ and precision parameter $w_i$.  For more on the connection between PR and Dirichlet process mixture models, see Section~\ref{SS:prml}.  

Three practically relevant points about the PR algorithm are as follows.
\begin{itemize}
\item The PR algorithm is very fast to compute.  Specifically, since each data point requires a fixed set of computations, usually depending on the number of grid points at which the mixing density is to be evaluated (see below), the computational complexity is $O(n)$.  This computational speed gives the user the necessary flexibility to do some additional things with the PR estimate; see below on Section~\ref{SS:prml}.  
\vspace{-2mm}
\item The choice of weight sequence, $(w_i)$, affects PR's performance.  Two points are intuitively clear: first, if the weights do not vanish, then the updates will never stabilize and there is no hope for convergence; second, if they vanish too fast, then the iterations get stuck because of having insufficient opportunity to learn from data, hence no chance to be consistent. Theoretically, to achieve convergence we need the weights to vanish but not too quickly, which is a common problem in the stochastic approximation literature \citep[e.g.,][]{robbins1951, kushner2003}.  Specifically, the sequence must satisfy $w_i > 0$, $\sum_{i=1}^\infty w_i = \infty$, and $\sum_{i=1}^\infty w_i^2 < \infty$.  This is easy to arrange, and it is recommended to take $w_i = (i+1) ^{-\gamma}$, with $\gamma \in (1/2,1]$.  A slightly narrower range of $\gamma$ is suggested in \citet{martintokdar2009}.  In our numerical examples below, we use $\gamma = 2/3$.    
\vspace{-2mm}
\item It is easy to see that the final estimators, $\Psi_n$ and $f_n$, depend on the order in which the data $Y_1,\ldots,Y_n$ are processed.  Of course, in applications involving iid or exchangeable data, the order is irrelevant (i.e., an ancillary statistic), so this order-dependence could be seen as an undesirable property.  To at least partially eliminate this order-dependence, it has been recommended \citep[e.g.,][]{newton2002, tmg} to average the order-dependent PR estimates over a number of permutations of the data sequence, i.e., a Rao--Blackwellization step.  This is feasible thanks to the speed of an individual run of PR.  More recently, it has been proposed to leverage PR's order-dependence for the purpose of deriving hypothesis tests and confidence intervals relevant to $\Psi$ \citep{dixitmartin2019}.  
\end{itemize}

Asymptotic convergence properties of the PR estimators, $\Psi_n$ and $f_n$, were first investigated in \citet{newton2002}, \cite{ghoshtokdar2006}, and \cite{martinghosh2008}, and then later refined and extended in \citet{tmg} and \cite{martintokdar2009}.  Of particular importance in the convergence theory for PR---and for many other methods that aim to solve this kind of inverse problem---is compactness of the support of the mixing distribution.  In many applications, this assumption fails without some artificial or impractical constraints.  However, in the present case where interest is in mixing distributions supported on the unit sphere in $\RR^3$, compactness is immediate. And given compactness, it is relatively easy to show that the kernel densities of interest here, described in Section~\ref{SS:sphere.dist} above, satisfy the conditions of the general PR convergence theory; see Propositions~1 and 2 in the Appendix. Below is a summary of the main conclusions, with more precise details given in the Appendix.  

\begin{thm}
\label{thm:pr}
If $f^\star$ is the true density for the iid data $Y_1,Y_2,\ldots$, which may or may not be of the mixture form \eqref{eq:mixture}, and if the conditions stated in the Appendix are satisfied, then the following properties of the PR estimators, $f_n$ and $\Psi_n$, hold:
\begin{enumerate}
\item The PR estimator of the mixture density, $f_n$, satisfies 
\[ K(f^\star, f_n) \to \inf K(f^\star, f), \quad \text{with $f^\star$-probability 1, as $n \to \infty$}, \]
where $K(f^\star, f)$ is the Kullback--Leibler divergence of $f$ from $f^\star$, and the infimum is over all $f$ of the mixture form \eqref{eq:mixture}.  If the mixture model is correctly specified, so that $f^\star$ is of the form \eqref{eq:mixture} for some $\Psi$, then the infimum is 0 and $f_n$ is a consistent estimator of $f^\star$.  
\vspace{-2mm}
\item If the posited mixture model \eqref{eq:mixture} is identifiable, then $\Psi_n \to \Psi^\star$ weakly, with $f^\star$-probability~1, as $n \to \infty$, where $\Psi^\star$ is the unique mixing distribution corresponding to the infimum in Part~1 above.  If the mixture model is correctly specified, then $\Psi^\star$ is the true mixing distribution and $\Psi_n$ is a consistent estimator of $\Psi^\star$.
\end{enumerate}
\end{thm}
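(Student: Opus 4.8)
The plan is to obtain both parts of the theorem as corollaries of the general predictive recursion convergence theory of \citet{tmg} and \citet{martintokdar2009}, by checking that their hypotheses hold in the present spherical setting. Those hypotheses involve three ingredients: the parameter space $\XX$ carrying the mixing distribution, the kernel $(y,x) \mapsto k(y \mid x)$, and the weight sequence $(w_i)$. First I would note that $\XX = \SS$ is a compact metric space, so the compactness requirement---typically the obstruction to applying this theory in Euclidean mixture problems---holds for free, with no artificial truncation. Next, for the weights, the recommended $w_i = (i+1)^{-\gamma}$ with $\gamma \in (1/2,1]$ evidently satisfies $w_i > 0$, $\sum_i w_i = \infty$, and $\sum_i w_i^2 < \infty$. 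The substantive step is to verify the kernel conditions---joint continuity on $\SS \times \SS$, a uniform upper bound, a uniform positive lower bound, and the accompanying integrability and equicontinuity requirements---for the von Mises--Fisher kernel \eqref{eq:vMF} and the Schladitz kernel \eqref{schladitz}; this is isolated in Propositions~1 and~2 of the Appendix. For von Mises--Fisher it is essentially immediate, since $\mu^\top y \in [-1,1]$ and $\kappa$ is fixed; for Schladitz, the quadratic form $y^\top \Sigma_{\mu,\beta}^{-1} y$ has eigenvalues lying in $\{1,1,\beta^2\}$ and is therefore bounded above and bounded away from zero uniformly in $\mu$, which yields the needed two-sided kernel bounds.

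With these conditions in place, Part~1 is precisely the density-estimation conclusion of \citet{tmg}: $K(f^\star, f_n) \to \inf_\Psi K(f^\star, f_\Psi)$ with $f^\star$-probability one. I would then argue that the infimum is attained, using weak compactness of the space of probability measures on $\SS$, continuity of $\Psi \mapsto f_\Psi(y)$ for each fixed $y$ (a consequence of the kernel being bounded and continuous), and lower semicontinuity of the Kullback--Leibler divergence in its second argument. In the correctly specified case $f^\star = f_\Psi$ for some $\Psi$, so the infimum is $0$ and $f_n$ converges to $f^\star$ in Kullback--Leibler divergence, hence also in $L^1$ and weakly.

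For Part~2, identifiability---injectivity of $\Psi \mapsto f_\Psi$---promotes the minimizer $\Psi^\star$ to the unique one. The weak convergence $\Psi_n \to \Psi^\star$ then follows by the usual subsequence argument, as in \citet{tmg} and \citet{martintokdar2009}: any subsequence of $(\Psi_n)$ lies in the weakly compact set of probability measures on $\SS$ and so has a further subsequence $\Psi_{n_k}$ converging weakly to some $\tilde\Psi$; along it $f_{\Psi_{n_k}} \to f_{\tilde\Psi}$ pointwise by continuity of the kernel; comparing with the Kullback--Leibler convergence from Part~1 and using uniqueness of the minimizer forces $f_{\tilde\Psi} = f_{\Psi^\star}$, and then identifiability gives $\tilde\Psi = \Psi^\star$; since every subsequence admits a further subsequence with the same weak limit, the whole sequence converges. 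When the model is correctly specified, $\Psi^\star$ is the true mixing distribution, giving consistency.

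I expect the main obstacle to be the kernel verification underlying Propositions~1 and~2---specifically, securing the uniform-in-$x$ upper and lower bounds and the equicontinuity of the kernel families over the entire sphere; once those estimates are available, the remainder is the machinery of \citet{tmg} applied essentially verbatim. A secondary caveat is that Part~2 rests on identifiability of the spherical mixture, which must be established separately for the kernel in use---it is known for von Mises--Fisher mixtures---and which we therefore carry as an explicit hypothesis rather than re-deriving here.
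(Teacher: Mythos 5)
Your proposal is correct and follows essentially the same route as the paper: both parts are obtained by invoking the general PR convergence theory of \citet{tmg} and \citet{martintokdar2009}, with the work reduced to checking compactness of $\SS$, the weight-sequence conditions, and the kernel conditions verified in Propositions~1--2 of the Appendix (your direct bound $\exp\{2\kappa(x_1-x_2)^\top y\}\leq e^{4\kappa}$ for the von Mises--Fisher case is a cosmetic variant of the paper's moment-generating-function argument). The subsequence/identifiability argument you sketch for Part~2 is exactly the mechanism in the cited references, so no genuinely different machinery is introduced.
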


Part~2 of Theorem~\ref{thm:pr} makes a non-trivial identifiability assumption about $\Psi$ in the mixture model \eqref{eq:mixture}.  For deconvolution problems on the real line, identifiability is relatively straightforward but, beyond that, the question becomes quite difficult.  Classical references on identifiability for mixing distributions supported on Euclidean spaces include \citet{teicher1961, teicher1963}.  To our knowledge, identifiability results for mixture models on the sphere are limited to \citet{kent1983}, who shows that finite mixtures of exponential family kernels on the sphere are identifiable.  Identifiability for general mixtures on the sphere is an open problem; see Section~\ref{S:discuss}.





\subsection{Computational details}
\label{SS:prcomp}

Here we discuss some computational details relevant to implementation of the PR algorithm in the spherical context.  First, a general mixing measure $\Psi$ is too complex so, for computational tractability, it is common to impose some structure on $\Psi$.  On one hand, the likelihood-based methods described above automatically impose a discrete structure, which may be undesirable.  The PR algorithm, on the other hand, is more flexible, allowing the user to specify a dominating measure with respect to which $\Psi$ has a density.  In our present context, a natural candidate for this dominating measure is the surface area measure $\sigma$.  With this structure, the mixture model \eqref{eq:mixture} can be re-expressed as 
\[ f(y) = \int k(y \mid x) \, \psi(x) \, \sigma(dx), \]
where $\psi = d\Psi/d\sigma$ is the unknown density of $\Psi$ with respect to $\sigma$.  Similarly, the PR algorithm can be modified to estimate the density directly, so the update \eqref{eq:pr} can also be re-expressed as 
\[ \psi_i(x) = (1-w_i) \, \psi_{i-1}(x) + w_i \frac{k(Y_i \mid x) \, \psi_{i-1}(x)}{f_{i-1}(Y_i)}, \quad i=1,\ldots,n. \]

Second, a quadrature method is needed to numerically evaluate the integral in the normalizing constant $f_{i-1}(Y_i)$ above and, more generally, to estimate the mixture density $f$.  For this, we suggest two approaches. The naive approach is to convert the variable $x$ on the sphere, represented in terms of Cartesian coordinates, to the spherical coordinates $(\theta, \phi)$ on the rectangle $[0,\pi) \times [0,2\pi)$.  As described in Section~\ref{SS:sphere.dist} above, if $\psi$ is a density on the sphere, then a corresponding density $\psi^\dagger$ on the rectangle is given by 
\begin{equation}
\label{eq:change.of.variables}
\psi^\dagger(\theta,\phi) = \psi(x_{\theta,\phi}) \, \sin\theta, \quad (\theta,\phi) \in [0,\pi) \times [0,2\pi), 
\end{equation}
where $x_{\theta,\phi}$ is the spherical-to-Cartesian coordinate transformation.  In some applications, especially if the kernel is antipodal, the support of the mixing distribution may not be the entire sphere so, of course, the range of $(\theta,\phi)$ would be suitably restricted in the above expression and in what follows.  Then the mixture density can be expressed in terms of an integration with $\psi^\dagger$ over the rectangle, i.e., 
\begin{equation}
\label{eq:reparamixture}
f(y) = \int_0^{2\pi} \int_0^\pi k(y \mid x_{\theta,\phi}) \, \psi^\dagger(\theta, \phi) \, d\theta \, d\phi, \quad y \in \SS. 
\end{equation}
Now that the to-be-evaluated normalizing constant can be expressed as numerical integration over a two-dimensional grid, the computations are relatively straightforward. Our spherical PR algorithm proceeds as follows. We start with an initial estimate of the density $\psi_0^\dagger$ supported on the grid $[0,\pi) \times [0, 2\pi)$.  Then we evaluate $\psi_i^\dagger$ along the data sequence, $i=1,\ldots,n$, computing the normalizing constant $f_{i-1}(Y_i)$ using the quadrature rule described above.  The final mixture density estimate, $f_n$, can be evaluated directly from $\psi_n^\dagger$ according to formula \eqref{eq:reparamixture}, and to evaluate the mixing density estimator, $\psi_n$, on the sphere we simply invert the change-of-variables in \eqref{eq:change.of.variables}. 

A less-naive approach is based on Lebedev quadrature proposed in \citet{lebedev1976}. It gives an approximation for the surface integration over a sphere in three dimensions. The procedure is to form a grid to accurately integrate a polynomial function of degree $p$ on the sphere.  Hence, computational complexity can be controlled by adjusting the number of grid points based on the complexity of the function to be integrated. A low degree polynomial could be integrated accurately with a small number of grid points.  However, our mixing density $\psi$ is unknown, so we cannot be sure that $k(Y_i \mid x) \, \psi(x)$ would be simple enough to be accurately integrated over a small number of grid points.  If we take a much finer set of grid points, to protect against inaccurate integration of a rough function, then our benefit to using the efficient Lebedev spacing is limited.  Therefore, we opt to use quadrature with the simpler naive grid points described above.  



Finally, recall that the PR estimator depends on the order in which the data sequence is processed.  To remove this order-dependence, the PR algorithm described above can be repeated for any number of permutations and the final estimators would be point wise averages of the permutation-specific PR estimators.  That is, let $\S_n$ be the set of all permutations $\{1, 2, \ldots, n\}$, and define $\psi_n^{(s)}$ to be the PR estimator based on data $Y^n$ arranged according to permutation $s \in \S_n$.  Then one can construct the (``Rao--Blackwellized'') {\em permutation-averaged} PR estimator 
\[ \psi_n^{\text{perm}}(x) = \frac{1}{n!} \sum_{s \in \S_n} \psi_n^{(s)}(x). \]
Of course, computing this {\em for all permutations} would be a burden, especially for large $n$, but a Monte Carlo approximation can be readily obtained by averaging only over, say, 25 permutations sampled at random from $\S_n$.

\subsection{Marginal likelihood}
\label{SS:prml}

The PR algorithm is designed for cases where the only unknown in the mixture model is the mixing distribution itself.  However, it is often the case that there are other unknown {\em structural} or {\em non-mixing} parameters in the mixture model.  For example, mixture models are commonly used for density estimation but with an unknown bandwidth parameter to provide the necessary flexibility.  So consider the following generalization of the original mixture model \eqref{eq:mixture} 
\begin{equation}
\label{eq:mixture2}
f(y) = f_{\lambda,\Psi}(y) = \int k_\lambda(y \mid x) \, \Psi(dx), 
\end{equation}
where $\Psi$ is still the unknown mixing distribution but now the kernel $k=k_\lambda$ depends on an unknown structural parameter $\lambda$ taking values in $\Lambda \subseteq \RR^d$ for some $d \geq 1$.  The goal now is to estimate both $\Psi$ and $\lambda$.  

For iid data $Y_1,\ldots,Y_n$ from the extended mixture model \eqref{eq:mixture2}, the likelihood is 
\begin{equation}
\label{eq:lik}
(\lambda, \Psi) \mapsto L_n(\lambda, \Psi) = \prod_{i=1}^n \int k_\lambda(Y_i \mid x) \, \Psi(dx). 
\end{equation}
A reasonable strategy for modifying the PR algorithm to estimate $\lambda$ would be to use a (fixed-$\lambda$) PR estimator of $\Psi$ to somehow eliminate the likelihood's dependence on $\Psi$, leaving a sort of likelihood only in $\lambda$ that could be maximized to produce an estimate.  \citet{tao1999} proposed to run the PR algorithm at each fixed $\lambda$ value to produce a $\lambda$-dependent estimator $\Psi_{n,\lambda}$ and then define a {\em profile likelihood} as 
\[ \lambda \mapsto L_n^{\text{\sc p}}(\lambda) = L_n(\lambda, \Psi_{n,\lambda}), \]
to be maximized over $\lambda \in \Lambda$.  A general issue with the use of profile likelihoods is that it effectively ignores the uncertainty in the plug-in estimator.  Motivated by the PR algorithm's close connection to the Dirichlet process mixture models, \citet{martintokdar2011} propose an alternative {\em marginal likelihood}
\begin{equation}
\label{eq:prml}
\lambda \mapsto L_n^{\text{\sc m}}(\lambda) = \prod_{i=1}^n \int k_\lambda(Y_i \mid x) \, \Psi_{i-1,\lambda}(dx), 
\end{equation}
which, again, can be maximized over $\lambda \in \Lambda$ to produce an estimator, i.e., 
\begin{equation}
\label{eq:prmle}
\hat\lambda_n = \arg\max_{\lambda \in \Lambda} L_n^{\text{\sc m}}(\lambda). 
\end{equation}
Computationally, evaluation of $\hat\lambda_n$ requires multiple runs of the PR algorithm, one for each iteration in the optimization procedure.  Fortunately, each PR run is fast, so maximizing the PR marginal likelihood is relatively efficient.  

Questions about the general statistical properties (e.g., consistency, asymptotic normality, etc.)~remain open, but here we provide some heuristic theoretical support.  Let $f^\star$ denote the true density, not necessarily of the form \eqref{eq:mixture2}, and define the function 
\[ K^\star(\lambda) = \textstyle\inf_\Psi K(f^\star, f_{\lambda, \Psi}), \]
where the infimum is over all mixing distributions $\Psi$ as in \eqref{eq:mixture2}.  The general PR convergence theory in Part~1 of Theorem~\ref{thm:pr} implies that, if $f_{n,\lambda}$ is the $\lambda$-specific PR estimator, then $K(f^\star, f_{n,\lambda})$ converges almost surely, as $n \to \infty$, to $K^\star(\lambda)$ for all fixed $\lambda$; note that the ``for all $\lambda$'' conclusion requires the full force of Theorem~\ref{thm:pr}, i.e., the convergence under a misspecified model, since other values besides the ``true'' $\lambda$ are being considered.  An empirical version of the Kullback--Leibler divergence, 
\[ K_n(\lambda) = \frac1n \sum_{i=1}^n \log \frac{f^\star(Y_i)}{f_{i-1,\lambda}(Y_i)}, \]
was proposed in \citet{martintokdar2011} and shown, in their Theorem~2, to converge pointwise to $K^\star$:
\begin{equation}
\label{eq:pointwise}
K_n(\lambda) \to K^\star(\lambda), \quad \text{with $f^\star$-probability~1, as $n \to \infty$, for all fixed $\lambda \in \Xi$}. 
\end{equation}
Since maximizing $L_n^{\text{\sc m}}(\lambda)$ in \eqref{eq:prmle} is equivalent to minimizing $K_n$, in light of \eqref{eq:pointwise}, a reasonable conjecture is that ``$\inf_\lambda K_n(\lambda) \to \inf_\lambda K^\star(\lambda)$'' in some sense, which, in turn, suggests that $\hat\lambda_n$ in \eqref{eq:prmle} converges to $\arg\min_\lambda K^\star(\lambda)$.  Of course, if the model \eqref{eq:mixture2} is correctly specified, then the latter arg-min equals the true value of $\lambda$, in which case $\hat\lambda_n$ would be a consistent estimator.  The challenge to making these latter claims rigorous is establishing a uniform-in-$\lambda$ version of \eqref{eq:pointwise}; see Section~\ref{S:discuss}.  Despite these unanswered questions, the above heuristics and the strong empirical performance shown in \citet{martintokdar2011, martintokdar2012}, \citet{prml-finite}, and \citet{martinhan2016} suggest that this PR marginal likelihood-based methodology is quite useful.  In Section~\ref{goodnessoffit}, we use this marginal likelihood framework to develop a novel goodness-of-fit test to compare a parametric model (i.e., a one-component mixture) on the sphere to a mixture thereof.

\section{Simulations}
\label{simulations}

\subsection{Overview}


Here we carry out a simulation study to investigate the performance of the mixing density estimation methods described in Section~\ref{S:pr}. Data is generated from a mixture distribution of the form \eqref{eq:mixture2} using different kernels and mixing distributions. As discussed in Section~\ref{SS:sphere.dist}, directional data can be divided in two main categories; axial or vectorial. To account for the effect of this structure, we use kernels of both types namely Schladitz (axial) and von Mises--Fisher (vectorial). For each kernel, we consider a variety of finite and continuous mixing distributions.


The comparisons are between the PR estimator as described above and the maximum likelihood estimator as implemented via the EM algorithm.  For the PR implementation, we first estimate the structural parameter using the approach proposed in Section~\ref{SS:prml}. Then given this estimate, the PR estimate of the mixing density is obtained using the method described in Section~\ref{S:pr}. For removing PR's order-dependence, we average over 10 random permutations. An estimate of the mixture $f(y)$ is obtained using this estimate. Implementation of the EM algorithm is based on fitting a finite mixture as stated in Section~\ref{mixtures}, where the common structural parameter and location parameter for each component $j$ are estimated in an iterative procedure by maximizing the updated expected likelihood. We allow EM algorithm to choose up to 10 mixture components, with the choice being made based on BIC. We use the EM setup suggested in \citet{franke2016} when using the Schladitz kernel, and the setup by \citet{hornik2014} for the von Mises--Fisher kernel. Indeed, performance of this algorithm is affected by the structure of the kernel density, the true mixing distribution and choice of initial values, as can be seen in the Schladitz case. Also, with reference to \citet{ng2020}, the likelihood function for finite mixture models is often unbounded, so maximizing the likelihood is not well-defined implying that the EM might only find a local maximum. 

For the comparisons, we use two different metrics.  The first is the Kullback--Leibler divergence between mixture densities.  The second is to compare the mixing distribution estimates but, since we are sometimes comparing discrete and continuous distributions, we opt for a version of the $L_1$ distance between discretized versions.  That is, we partition the $(\theta,\phi)$ range into rectangles $A_k$, for $k=1,\ldots,K$ and, for an estimate $\widehat\Psi^\dagger$ of the true distribution $\Psi^\dagger$ on the $(\theta,\phi)$-space, we compute 
\begin{equation}
\label{eq:mixing_calculation}
d(\Psi^\dagger, \widehat\Psi^\dagger) = \sum_{k=1}^K \bigl| \widehat\Psi^\dagger(A_k) - \Psi^\dagger(A_k) \bigr|.
\end{equation}
For each case, we generate samples of size $n=2000$ from the stated mixture model, and the results presented below are based on 50 replications.  


\subsection{Schladitz mixtures} 
\label{Schladitz}

Since this is an antipodally symmetric distribution, to avoid identifiability issues we use only the upper hemisphere as support for the probability density of the location parameter $\mu$, i.e, for $(\theta_0, \phi_0)$ as the spherical coordinates of $\mu$, $(\theta_0,\phi_0) \in \{(0,\pi/2) \times (0,2\pi)\}$. The mixing distributions we use for the simulations are as follows:

\begin{enumerate}
\item {\em Two-point discrete distribution:} $\psi(\theta, \phi) = \psi \, \delta_{(\pi/2, 0)} + (1-\psi) \, \delta_{(0,0)}$, where $\psi$ denotes the mixing proportion and $\delta_x$ the Dirac point-mass distribution at $x$.  Cases (a)--(d) correspond to $\psi = 0.5, 0.25, 0.2, 0.1$, respectively.

\item {\em Unimodal continuous distribution:} $\psi(\theta, \phi) = \mathsf{trN}_2 (\eta, \Sigma)$, where $\mathsf{trN}_2$ denotes a bivariate normal distribution with mean vector and covariance matrix 
\[ \eta = (\pi/4, \pi)^\top \quad \text{and} \quad \Sigma = \begin{pmatrix} (\pi/12)^2 & (\pi/12)^2 \\ (\pi/12)^2 & (\pi/3)^2 \end{pmatrix}, \]
truncated to the $(\theta,\phi)$ range; 

\item {\em Unimodal continuous distribution:} $\psi(\theta,\phi) = \frac{2}{\pi}\bet(\frac{2\theta}{\pi} \mid 2,5) \times \frac{1}{2\pi}\bet(\frac{\phi}{2\pi} \mid 2,2)$.  This has a larger spread than the truncated normal above.

\item {\em Bimodal continuous distribution:} $\psi(\theta,\phi) = 0.5 \, \mathsf{trN}_2(\eta_1, \Sigma) + 0.5 \,  \mathsf{trN}_2(\eta_2, \Sigma)$, where the mean vectors are $\eta_1 = (\pi/4, \pi/2)^\top$ and $\eta_2 = (\pi/4, 5\pi/4)^\top$, and the covariance matrix is
\[ \Sigma = \begin{pmatrix} (\pi/12)^2 & 0 \\ 0 & (\pi/6)^2 \end{pmatrix}. \]
\end{enumerate}

The structural parameter $\beta$ is fixed to be 0.1 over all components. Simulation results are given in Table~\ref{tab:sch}. Let us first look at the results for estimation of mixing distribution, as measured by the distance $d$ in \eqref{eq:mixing_calculation}. When the truth is a two point discrete distribution, the discrete estimate obtained via EM does slightly better then PR. Understandably in the continuous case, the continuous estimate obtained via PR outperforms the estimate via EM for the true mixing density. The Kullback--Leibler divergence results for estimating the mixture density convey more interesting insights. For the first four finite mixture cases one would expect efficient estimation via EM. However, both the mean and standard errors are larger for EM than PR. We believe this is due to the choice of initial values for the location parameter $\mu$. As we can see from the bimodal histogram in Panel~(a) of Figure~\ref{fig:em_hist}, even though a large number of the Kullback--Leibler divergence  values are small, a significant number exhibit behavior away from these values. For an empirical check, we ran the EM algorithm with the true means as initial values, which resulted in fewer large Kullback--Leibler values and hence better performance. For the case of 1(c), we can see that the mean Kullback--Leibler divergence is smaller as it has fewer large values than case (a). Moreover, despite lacking the knowledge that the mixture is finite, PR does a reasonable good job in terms of estimating the mixture density. As expected, the PR estimator outperforms the maximum likelihood estimator in cases where the true mixing density is continuous.  For visualization purposes, Figure~\ref{fig:mix_norm} shows contour plots of the true mixing distribution---the mixture of normals case in Example~4---along with the PR estimate in the case of a Schladitz kernel.  Clearly, PR is picking up the overall shape of the true mixing density quite well.

\begin{table}[t]
\centering
\begin{tabular}{ccccc}
    \hline
Mixing density & $\text{KL}_{\text{PR}}$ & $\text{KL}_{\text{ML}}$ & $d(\Psi^\dagger, \widehat\Psi^\dagger)_{\text{PR}}$ & $d(\Psi^\dagger, \widehat\Psi^\dagger)_{\text{ML}}$ \\
 \hline
 1(a) & 0.014 (0.001) & 0.098 (0.023) & 0.600 (0.006) & 0.305 (0.024)\\
 1(b) & 0.012 (0.001) & 0.044 (0.017) & 0.533 (0.008) & 0.262 (0.021)\\
 1(c) & 0.011 (0.001) & 0.029 (0.010) & 0.504 (0.009) & 0.201 (0.014)\\
 1(d) & 0.010 (0.001) & 0.043 (0.003) & 0.477 (0.009) & 0.132 (0.016)\\
 2 & 0.017 (0.001) & 0.034 (0.001) & 0.300 (0.010) & 1.806 (0.025)\\
 3 & 0.013 (0.001) & 0.028 (0.000) & 0.396 (0.009) & 2.046 (0.028)\\
 4 & 0.021 (0.003) & 0.022 (0.001) & 0.264 (0.009) & 2.110 (0.026)\\
 \hline
\end{tabular}
\caption{Simulation results using the Schladitz distribution as the kernel density.  Entries correspond to the mean of each feature, with standard errors in parentheses. KL denotes the Kullback--Leibler divergence of the mixture density estimate from the truth.}
\label{tab:sch}
\end{table}

\begin{figure}
    \centering
    \subfloat[]{{\includegraphics[width = 6cm, height = 6 cm]{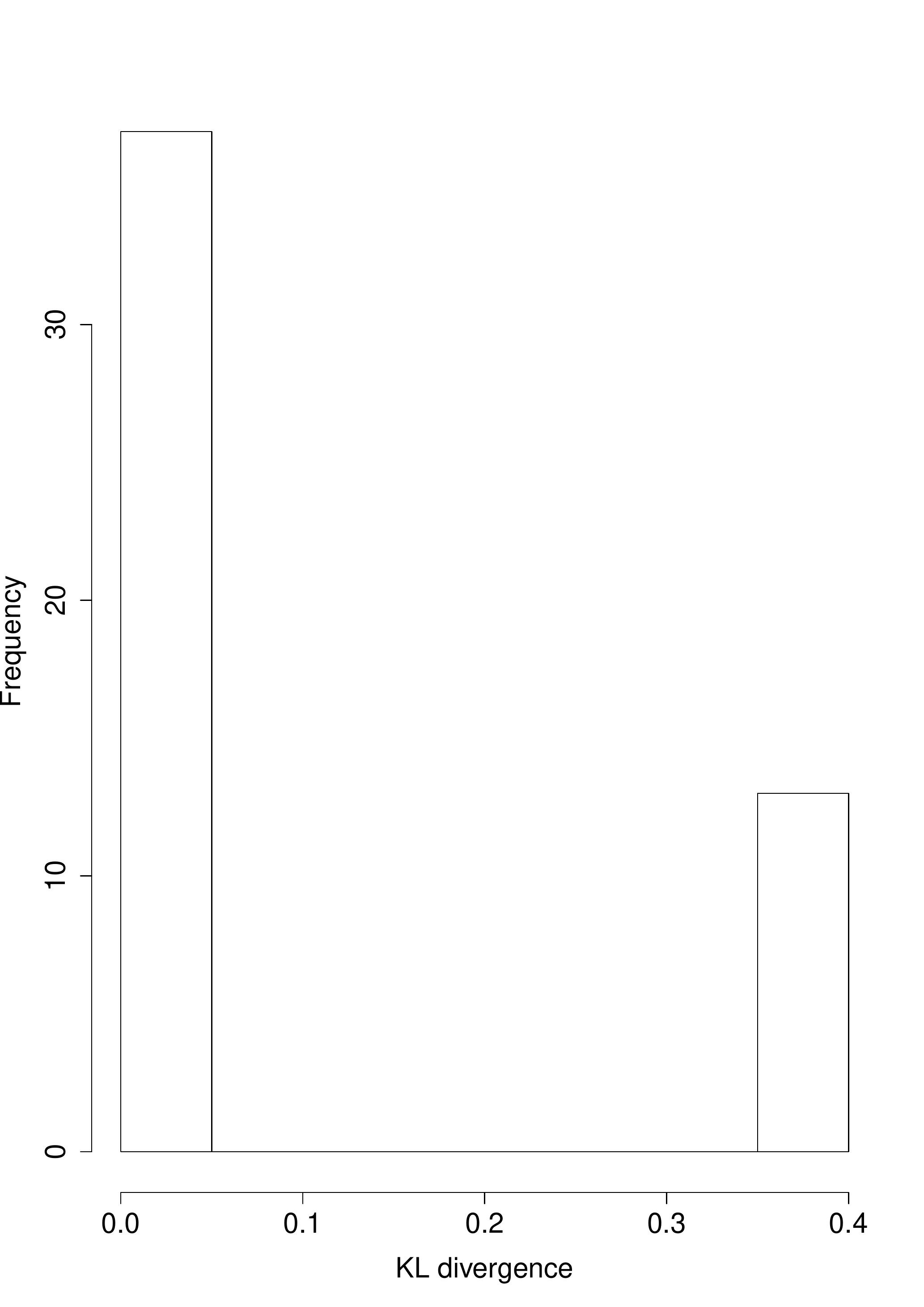}}}%
    \qquad
    \subfloat[]{{\includegraphics[width = 6cm, height = 6 cm]{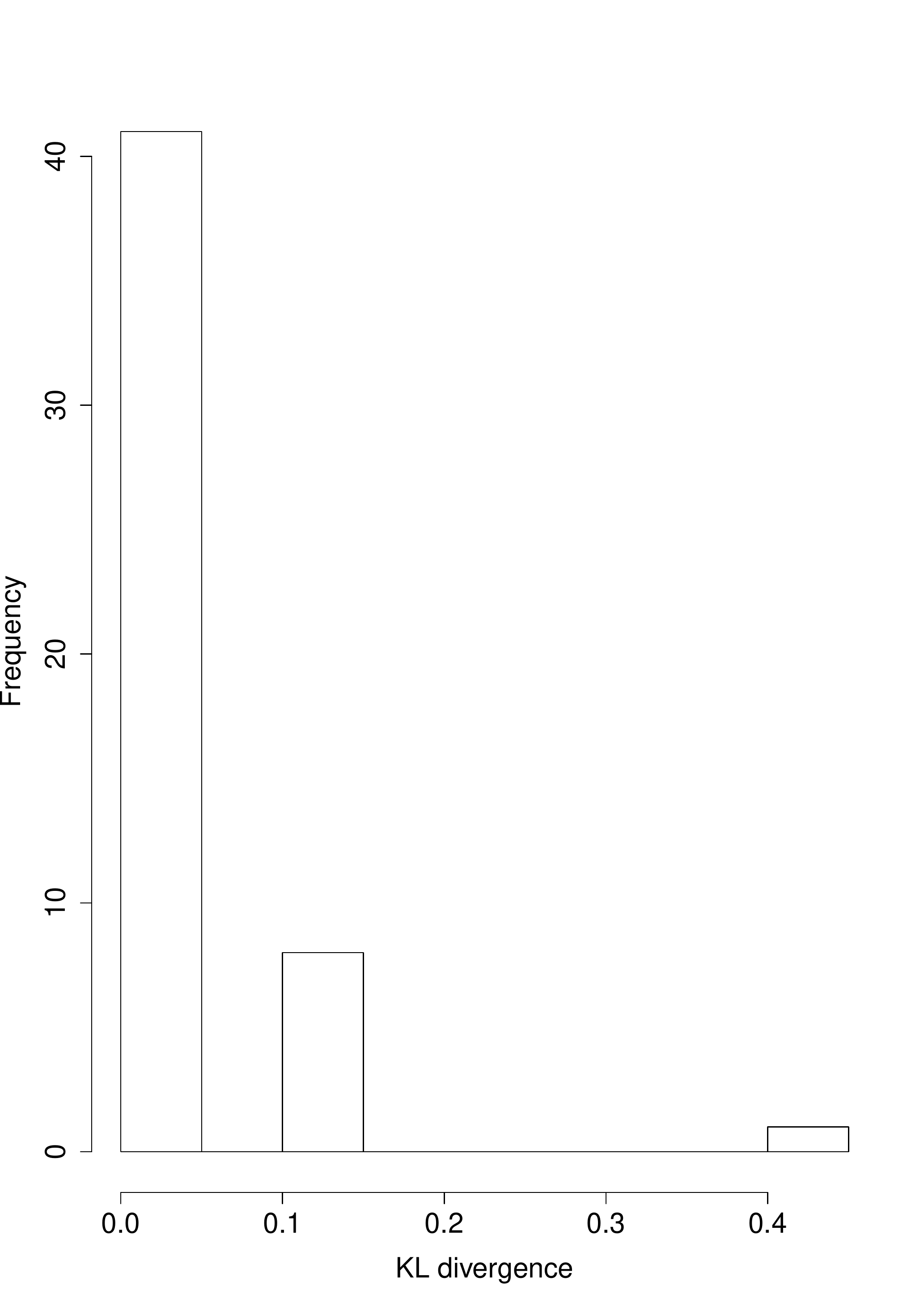}}}%
    \caption{Histogram of the Kullback--Leibler divergence values obtained from maximum likelihood estimation of mixing density 1(a) and 1(c) respectively, using the EM algorithm under the Schladitz kernel.}  %
    \label{fig:em_hist}
    \end{figure}


\begin{figure}[t]
\begin{center}
\subfloat[Mixing density 4 on a 2-dimensional grid]{
  \includegraphics[width=8 cm]{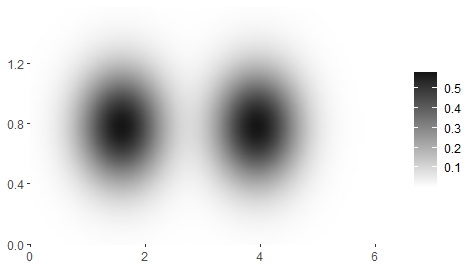}}
\subfloat[Mixing density 4 induced on a hemisphere]{
  \includegraphics[width=8 cm]{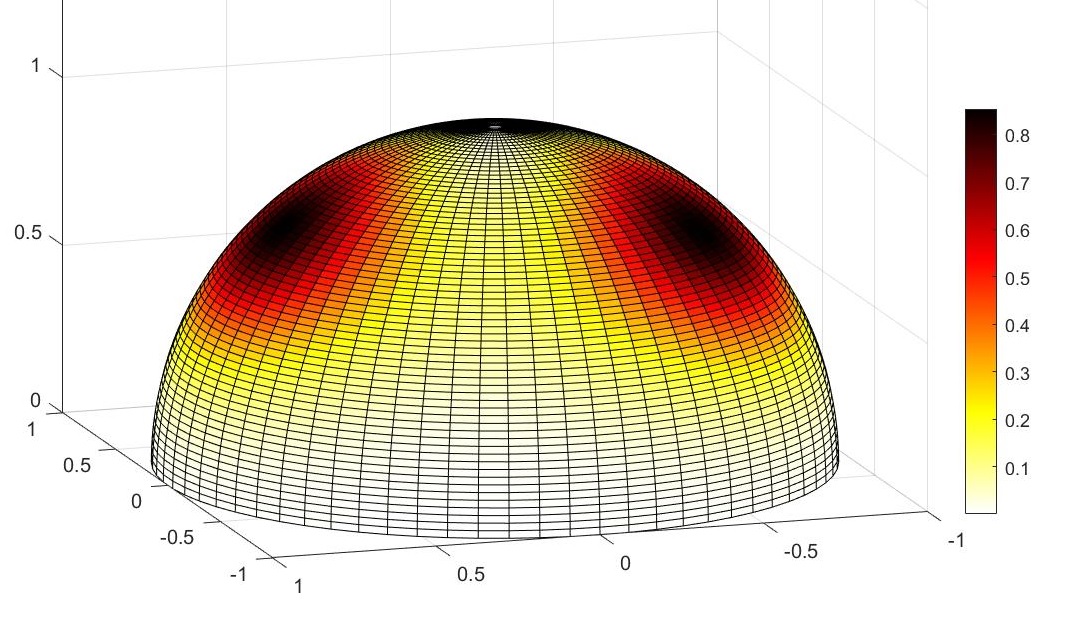}}
\hspace{0mm}
\subfloat[PR estimate of mixing density 4]{
  \includegraphics[width=8 cm]{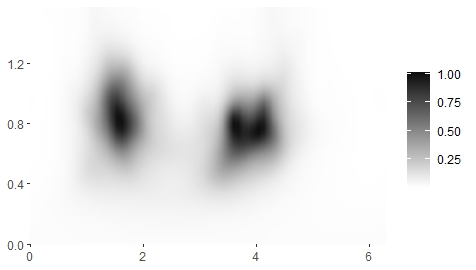}}
\subfloat[PR estimate of mixing density 4, induced on a hemisphere]{
  \includegraphics[width=8 cm]{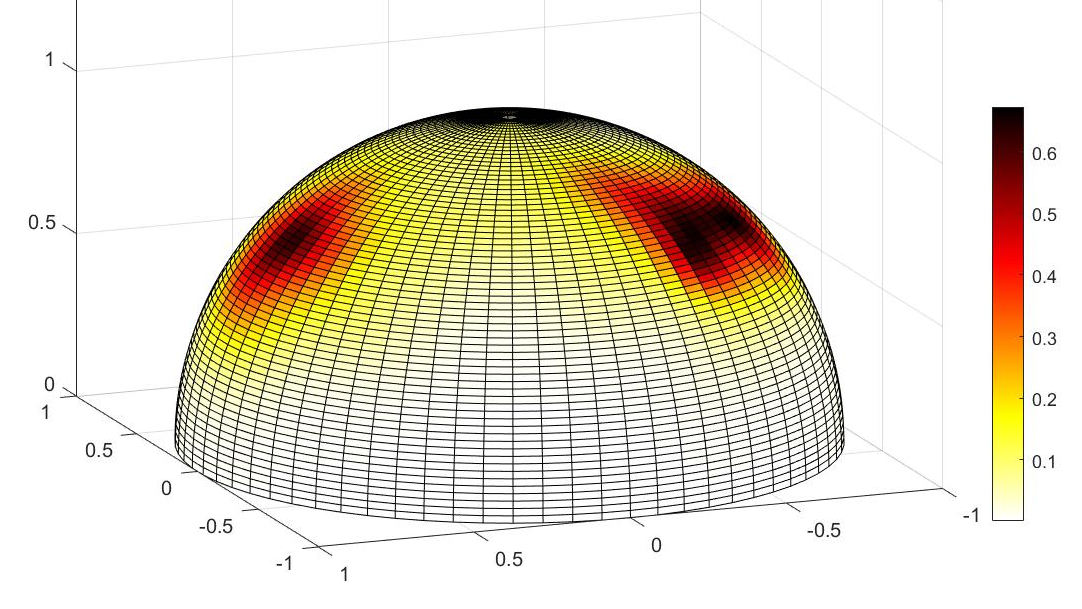}}
\end{center}
\caption{}
\label{fig:mix_norm}
\end{figure}

\subsection{von Mises--Fisher mixtures}
\label{vMF}

Since the von Mises--Fisher is a vectorial distribution, the support for the mixing distribution is now the entire sphere, i.e, for $(\theta_0, \phi_0)$ as the spherical coordinates of $\mu$, $(\theta_0,\phi_0) \in \{(0,\pi) \times (0,2\pi)\}$. The mixing distributions we use for the von Mises--Fisher simulations are as follows:

\begin{enumerate}
\item {\em Two-point discrete distribution:} $\psi(\theta, \phi) = 0.5 \, \delta_{(\pi/2, 0)} + 0.5 \, \delta_{(\pi/2,\pi/2)}$. 

\item {\em Unimodal continuous distribution:} $\psi(\theta, \phi) = \mathsf{trN}_2 (\eta, \Sigma)$, where $\mathsf{trN}_2$ denotes a bivariate normal distribution with mean vector and covariance matrix 
\[ \eta = (\pi/4, \pi)^\top \quad \text{and} \quad \Sigma = \begin{pmatrix} (\pi/12)^2 & (\pi/12)^2 \\ (\pi/12)^2 & (\pi/3)^2 \end{pmatrix}. \]

\item {\em Unimodal continuous distribution:} $\psi(\theta,\phi) = \frac{1}{\pi}\bet(\frac{\theta}{\pi} \mid 2,5) \times \frac{1}{2\pi}\bet(\frac{\phi}{2\pi} \mid 2,2)$.  This has a larger spread than the truncated normal above.

\item {\em Bimodal continuous distribution:} $\psi(\theta,\phi) = 0.5 \, \mathsf{trN}_2(\eta_1, \Sigma) + 0.5 \,  \mathsf{trN}_2(\eta_2, \Sigma)$, where the mean vectors are $\eta_1 = (\pi/4, \pi/2)^\top$ and $\eta_2 = (\pi/4, 5\pi/4)^\top$, and the covariance matrix is
\[ \Sigma = \begin{pmatrix} (\pi/12)^2 & 0 \\ 0 & (\pi/6)^2 \end{pmatrix}. \]

\item {\em Multimodal continuous distributions:} 
\begin{enumerate}
\item $\psi(\theta,\phi) = \frac{1}{\pi}\bet(\frac{\theta}{\pi} \mid 4,4) \times \unif(0,2\pi)$;
\item $\psi(\theta,\phi) = \unif(0,\pi) \times \frac{1}{2\pi}\bet(\frac{\phi}{2\pi} \mid 4,4)$.
\end{enumerate}
\end{enumerate}

We use $\kappa = 10$ for all cases. Simulation results are given in Table~\ref{tab:vmf}. In terms of the mixing distribution metric \eqref{eq:mixing_calculation}, performance of PR is better than maximum likelihood in the continuous cases and vice-versa in the discrete case. For the mixture density estimation comparisons, performance of maximum likelihood is comparable to that of PR. The EM algorithm in the case of von Mises--Fisher kernels is less sensitive to initial values than for Schladitz kernels, as can be seen through the results in Table~\ref{tab:vmf}. However, in terms of the continuous cases, performance of PR is still better as it actually gives a continuous estimate of a mixing density.  

\begin{table}[t]
    \centering
    \begin{tabular}{ccccc}
    \hline
    Mixing density & $\text{KL}_{\text{PR}}$ & $\text{KL}_{\text{ML}}$ & $d(\Psi^\dagger, \widehat\Psi^\dagger)_{\text{PR}}$ & $d(\Psi^\dagger, \widehat\Psi^\dagger)_{\text{ML}}$\\
    \hline
    1 & 0.004 (0.000) & 0.002 (0.000) & 1.039 (0.007) & 0.304 (0.016)\\
    2 & 0.002 (0.000) & 0.005 (0.000) & 0.268 (0.005) & 1.535 (0.011)\\
    3 & 0.004 (0.000) & 0.003 (0.000) & 0.466 (0.011) & 1.760 (0.011)\\
    4 & 0.003 (0.001) & 0.005 (0.000) & 0.570 (0.008) & 1.901 (0.009)\\
    5(a) & 0.005 (0.000) & 0.011 (0.000) & 0.443 (0.006) & 1.927 (0.020)\\
    5(b) & 0.012 (0.000) & 0.019 (0.001) & 0.393 (0.008) & 1.915 (0.013)\\
    \hline
    \end{tabular}
    \caption{Simulation results using the von Mises--Fisher distribution as the kernel density.  Entries correspond to the mean of each feature, with standard errors in parentheses. KL denotes the Kullback--Leibler divergence of the mixture density estimate from the truth.}
    \label{tab:vmf}
\end{table}

\section{Extensions and applications}
\label{S:extensions}


\subsection{Goodness-of-fit testing}
\label{goodnessoffit}

Goodness-of-fit testing is a classical inference problem wherein two candidate models are compared in terms of how well they fit a particular data set.  \citet{mardia2009} describes some of the standard nonparametric tests in the context of directional data.  An important special case of the general goodness-of-fit testing is comparing a single parametric model with a mixture thereof.  This is a highly non-trivial problem because model singularities significantly affect the standard distribution theory. So, despite its long history, with lots of proposed solutions (e.g. \citet{chen2009}; \cite{mclachlan1987}), the mixture model testing problem arguably remains unsolved.  Below we generally follow a Bayesian formulation \citep[e.g.,][]{berger2001, mcvinish2009, tokdar2010, tokdar2019}, but we make novel use of PR and the PR marginal likelihood to avoid directly specifying a full nonparametric model under the alternative.  


Suppose $Y^{n} = (Y_1, \dots, Y_n)$ are iid observations from distribution with density $f$.  The goodness-of-fit test aims to decide between the two hypotheses 
\[ H_0: f \in \FF_0 \quad \text{and} \quad H_1: f \in \FF_1, \]
where $\FF_0$ and $\FF_1$ are two candidate models for $f$.  In our context, we are thinking of $\FF_0$ as a parametric family of distributions, like that described in Section~\ref{SS:prml}, i.e., 
\[ \FF_0 = \{k_\lambda(\cdot \mid \mu): \lambda \in \Lambda, \, \mu \in \SS\}, \]
where $k_\lambda(\cdot \mid \mu)$ is, say, a Schladitz or von Mises--Fisher kernel, $\mu$ is the directional parameter, and $\lambda$ is $\beta$ or $\kappa$, respectively.  Similarly, $\FF_1$ consists of kernels in $\FF_0$ mixed over the directional parameter, i.e., 
\[ \FF_1 = \{f_{\lambda,\psi}: \lambda \in \Lambda, \, \text{$\Psi$ is a mixing distribution on $\SS$}\}, \]
where $f_{\lambda,\Psi}(y) = \int k_\lambda(y \mid x) \, \Psi(dx)$ like in \eqref{eq:mixture2}.  We will also assume that $\Psi$ has a density $\psi$ with respect to the surface area measure $\sigma$ on the sphere.  

In general, the Bayesian approach to this problem proceeds by introducing prior distributions for the density $f$ under both $H_0$ and $H_1$, calculate the respective marginal likelihoods, namely, $m_0(Y^n)$ and $m_1(Y^n)$, and write the Bayes factor as 
\[ B(y^n) = m_0(y^n) / m_1(y^n). \]
Then the data shows a lack of support for $H_0$ if and only if the Bayes factor is relatively small.  For our specific setting, we consider the following prior formulations.
\begin{itemize}
\item Under $H_0$, the density is finite dimensional, determined by $\mu$ and $\lambda$, which we will model as independent {\em a priori}.  For the directional parameter $\mu$ in $\SS$, we take a uniform prior; for the parameter $\lambda$, we take a prior density $g$ (with respect to Lebesgue measure) on $\Lambda$.  Then the marginal likelihood is defined by the integral 
\[ m_0(y^n) = \int_\Lambda L_n^{\text{\sc m},0}(\lambda) \, g(\lambda) \, d\lambda, \]
where $L_n^{\text{\sc m},0}(\lambda) = (4\pi)^{-1} \int_{\SS} \prod_{i=1}^n k_\lambda(y_i \mid x) \, \sigma(dx)$ is the marginal likelihood for $\lambda$.  
\item Under $H_1$, the density is infinite dimensional because it is determined by the mixing density $\psi$ and the non-mixing parameter $\lambda$; again, we model these two as independent {\em a priori}.  For $\lambda$, we use the same prior $g$ as above.  It would be natural to model the mixing density $\psi$ or, more precisely, the mixing distribution $\Psi$ with a Dirichlet prior distribution, say, $\Pi$, with the uniform distribution on $\SS$ as the base measure---to align with the prior under $H_0$.  Then the marginal likelihood can be obtained by integrating the likelihood function $L_n(\lambda, \Psi)$ in \eqref{eq:lik} with respect to the prior distribution for $(\lambda, \Psi)$.  Here, instead, we make use of the connection between Dirichlet process mixtures and the PR algorithm---discussed in Section~\ref{SS:prml}---to approximate this marginal likelihood.  Indeed, we propose to use 
\[ m_1(y^n) = \int_\Lambda L_n^{\text{\sc m},1}(\lambda) \, g(\lambda) \, d\lambda, \]
where $L_n^{\text{\sc m},1}(\lambda)$ is the PR-based approximation of the marginal likelihood for $\lambda$.  Inside this approximation, to align with $H_0$, the PR algorithm is run with initial guess $\psi_0$ equal to the uniform prior density on $\SS$.  
\end{itemize} 

To simplify the computations further, we propose to use a Laplace approximation to evaluate both the numerator and denominator integrals in the Bayes factor.  That is, 
\[ B(y^n) \approx \frac{g(\hat\lambda_n^0)}{g(\hat\lambda_n^1)} \, \frac{L_n^{\text{\sc m},0}(\hat\lambda_n^0)}{L_n^{\text{\sc m},1}(\hat\lambda_n^1)} \, \left\{ \frac{(-\log L_n^{\text{\sc m},1})''(\hat\lambda_n^1)}{(-\log L_n^{\text{\sc m},0})''(\hat\lambda_n^0)} \right\}^{1/2}, \]
where $\hat\lambda_n^0$ and $\hat\lambda_n^1$ are the maximum marginal likelihood estimates under $H_0$ and $H_1$, respectively, and the double-prime notation denotes a second derivative. For the actual computation of the above, we use the {\tt optim} function in R to find the maximum likelihood estimates and a numerical approximation by Richardson extrapolation for calculating the second derivative.


As an illustration of this approach, consider the example presented in \citet{fisher1993} on directional data from Triassic sandstone
at Wanganderry Lookout, New South Wales. The data set consists of $n=221$ observations which are orientations of joint planes. These measurements are converted to spherical coordinates according to the ISO convention and represented in Figure~\ref{fig:Rock_data}. Most of the observations are bunched in clusters spread around the equator indicating either a girdle or a bipolar mixture model. Below we consider two formulations of the goodness-of-fit problem, one with a Schladitz kernel and the other with a von Mises--Fisher kernel.  

\begin{figure}
    \centering
    \includegraphics[width=10cm]{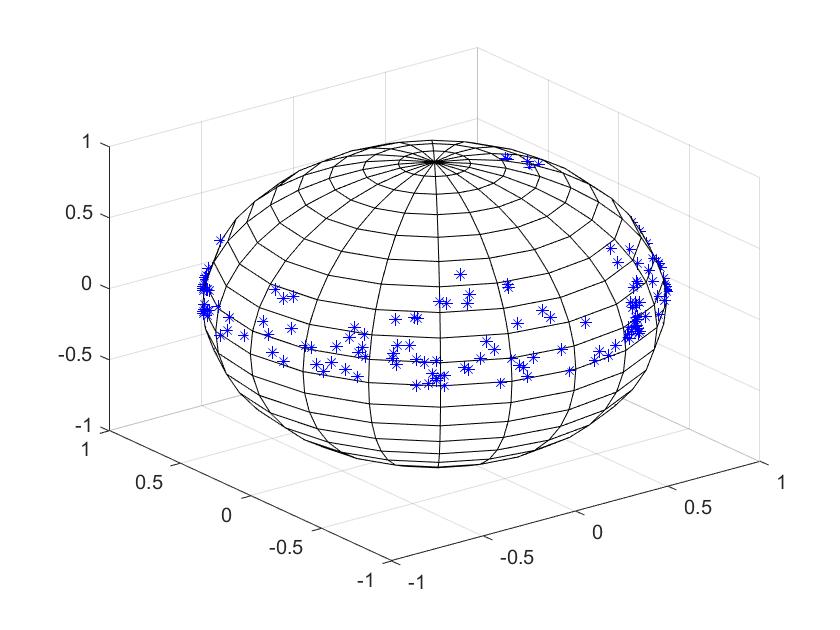}
    \caption{Orientations of the joint planes indicated by asterisks on the sphere}
    \label{fig:Rock_data}
\end{figure}

\begin{figure}
    \centering
    \subfloat[]{{\includegraphics[width=7.5cm]{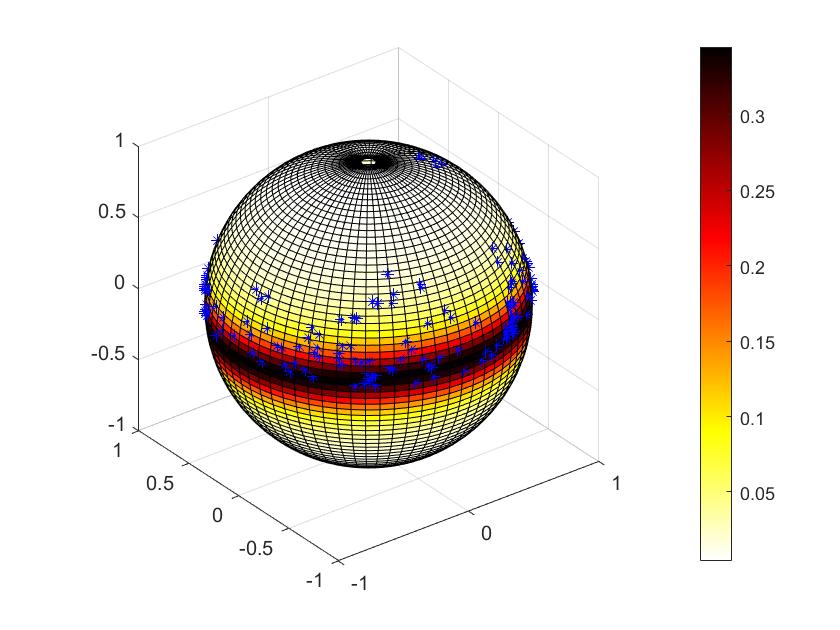}}}%
    \qquad
    \subfloat[]{{\includegraphics[width=7.5cm]{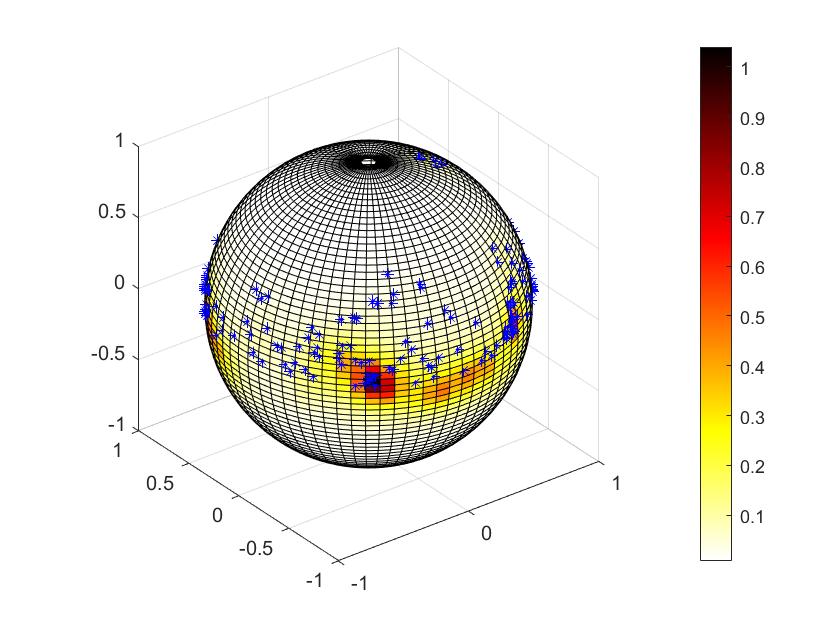}}}%
    \caption{Panels (a) and (b) represent the contour plots of the fitted Schladitz models under $H_0$ and $H_1$ respectively. The blue asterisks represent the observations.}  %
    \label{fig:schladitz_goodnessoffit}%
    \end{figure}

\begin{enumerate}
\item {\em Schladitz versus Schladitz mixtures}. For a goodness of fit test of a single Schladitz distribution versus a mixture of the same, we take the kernel $k_{\lambda}(y \mid x)$ as the Schladitz probability density, where $\lambda$ = $\beta$. The prior density for $\beta$, $g(\beta)$ is taken to be a gamma distribution with shape parameter 2 and scale parameter 0.5.  The Bayes factor is obtained using Laplace approximation as explained above. Our calculations resulted in a large Bayes factor ($> 10^9$) indicating that the simpler model under $H_0$ is a better fit and the complex mixture is unnecessary in this case. Observe Figure~\ref{fig:schladitz_goodnessoffit} which represents the fitted models under $H_0$ and $H_1$ respectively. The fitted distribution in (b) has peaks in the density where no observations are observed, which is contradictory to the data driven philosophy of PR. Recall that the Schladitz distribution is antipodally symmetric implying that for every ``peak'' in the upper hemisphere it has to have the same peak in the lower hemisphere. Since the data itself does not come from an antipodal model we see this anomalous behavior. Clearly the simpler model (under $H_0$) is a better fit to the data.


\begin{figure}
    \centering
    \subfloat[]{{\includegraphics[width=7.5cm]{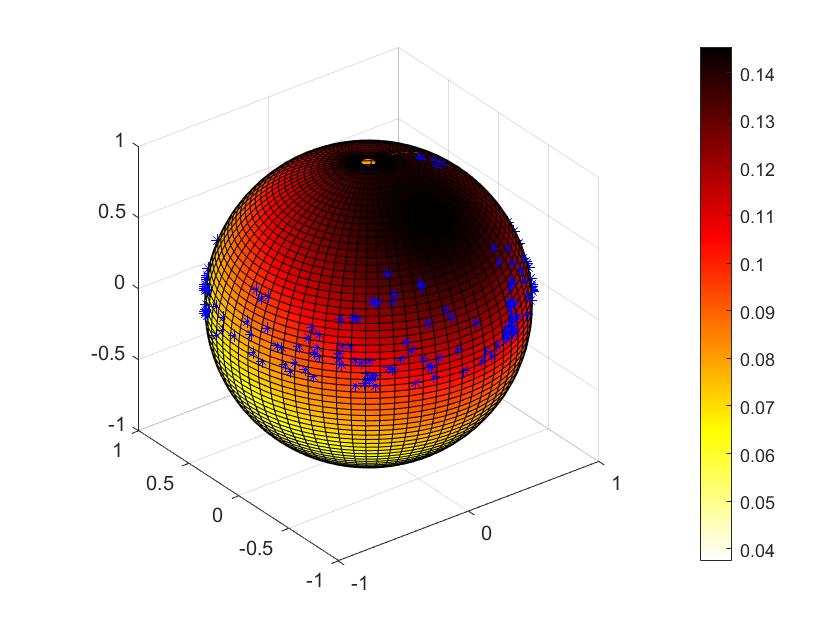}}}%
    \qquad
    \subfloat[]{{\includegraphics[width=7.5cm]{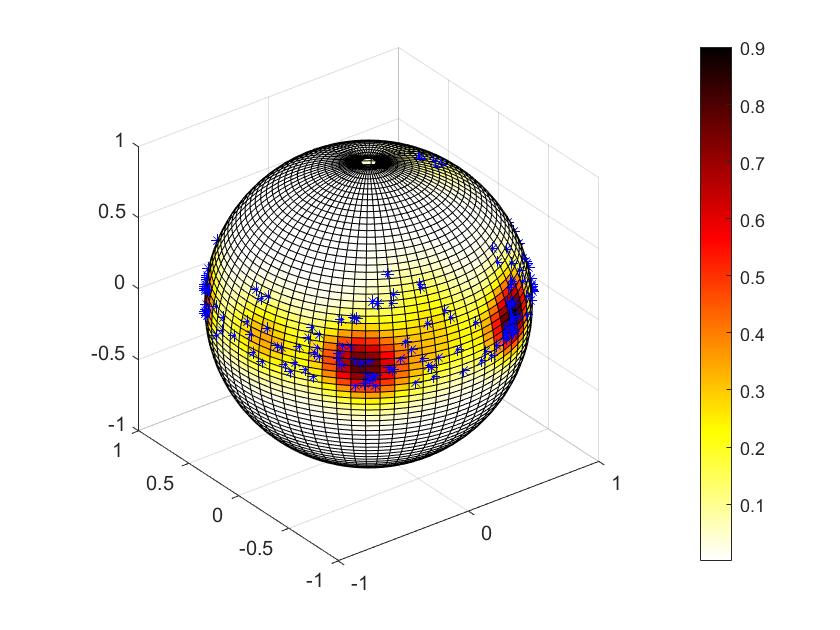}}}%
    \caption{Panels (a) and (b) represent the contour plots of the fitted von Mises--Fisher mixtures under $H_0$ and $H_1$ respectively. The blue asterisks represent the observations.}  %
    \label{fig:vmf_goodnessoffit}%
\end{figure}

\item {\em von Mises--Fisher versus von Mises--Fisher mixtures}.  For a comparison of a single von Mises--Fisher distribution versus a mixture of the same, $k_{\lambda}(y \mid x)$ is the vMF probability density, with $\lambda$ = $\kappa$. $g(\kappa)$ is taken to be the same gamma distribution as in the Schladitz case. The Bayes factor in this case is very small ($ < 10^{-10}$) indicating that the alternative model, i.e., the mixture, is preferred in this scenario. Also, Figure~\ref{fig:vmf_goodnessoffit} shows that the mixture is better able to represent clusters of data than the single distribution. Hence, the mixture is a better fit in this case. 

\end{enumerate}

\subsection{Identifying clusters from point-cloud data}

A Kinect sensor can be used to obtain depth images of a scene. Roughly, this stores the distance of a point to the sensor, in addition to the colored image. From this {\em point-cloud} data, surface normals can be extracted for further exploration. Naturally, given the various depths in the image, these normals tend to cluster on a unit sphere based on similar depths. These clusters can be viewed as missing labels representing similar orientations of planes. Identification of these clusters helps to sectionalize the image thus helping to recreate it. Our goal here is to fit a mixture model on these surface normals with the mixing density representing the clustering distribution. Then the PR estimate of the mixing density can be used for image reconstruction.  This illustration is based on point cloud data obtained from \citet{straub2015}. The data consists of smoothed surface normals extracted from depth images of the NYU V2 dataset. 


\begin{figure}
    \centering
    \includegraphics[width=10cm]{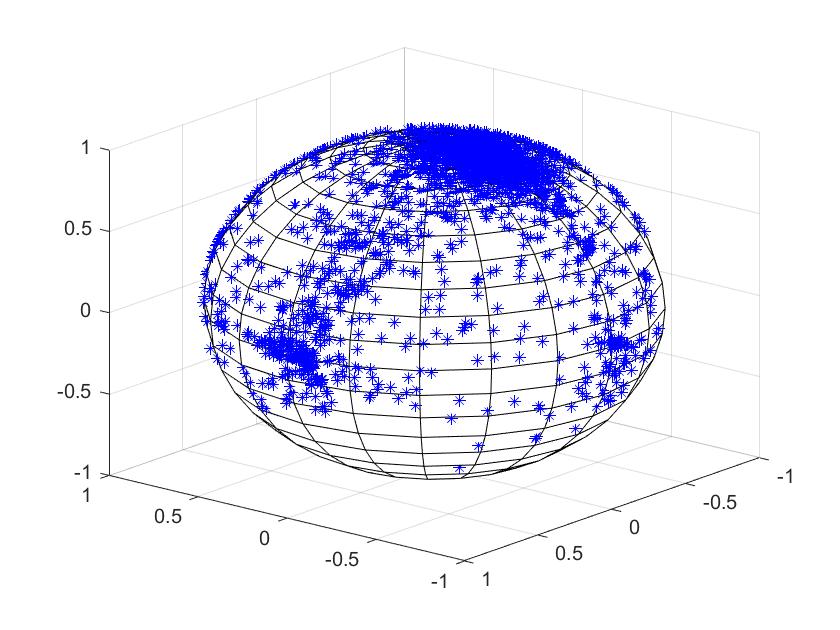}
    \caption{A subset of the NYU data set.}
    \label{fig:ny_subset}
\end{figure}

Let the smoothed surface normals be denoted by $Y_1, \ldots, Y_n$. \cite{straub2015} model the probability density of the missing labels (orientation of the planes) as a discrete distribution with infinite support points. Let $\psi (x)$ denote this density, which plays the role of the mixing density as in  equation \ref{eq:mixture}. Recall that we do not need to make any assumptions regarding this distribution. To use PR for estimation of the mixing distribution, we need to specify an appropriate kernel. Since the data set in question is huge (~233,000 raw data points, removing the missing observations), plotting the entire data set does not give us a clear idea of its distribution, hence we look at a random subset of these surface normals as in Figure~\ref{fig:ny_subset}. The clusters formed in the plot do not suggest an antipodal model due to an absence of equivalent clusters in the upper and lower hemisphere. Hence, we propose the use of a von Mises--Fisher kernel.

Estimation of the latent mixing density first involves estimating the structural parameter $\kappa$ associated with the kernel distribution. Using the entire data set for estimating $\kappa$ is computationally inefficient even with fast computation of PR. Hence we use a random subset of 500 observations to estimate $\kappa$ using the maximization of marginal likelihood method as proposed in Section~\ref{SS:prml}. Using this estimate, the mixing density $\psi$ is estimated using PR as in Section~\ref{SS:prcomp}. Figure~\ref{fig:latentdensity} shows a contour plot of the mixing density estimate on the $(\theta,\phi)$ grid. As we can see from the figure, the distribution has around 4--5 modal locations representing the depths in the image. The idea is to assign each $Y_i$ with a cluster label based on this mixing distribution. For this we maximize the posterior probability for each $Y_i$ with respect to these clusters, hence obtaining an assignment. To represent this labelling, each cluster label is allotted a color and an image is constructed based on these colors at each respective $Y_i$ as in the original image. The reconstruction is represented in Figure~\ref{fig:PRpointcloud}, where we can see that the estimated clustering has been able to identify the depths in the original image.

\begin{figure}[t]
    \centering
    \includegraphics[width=10cm]{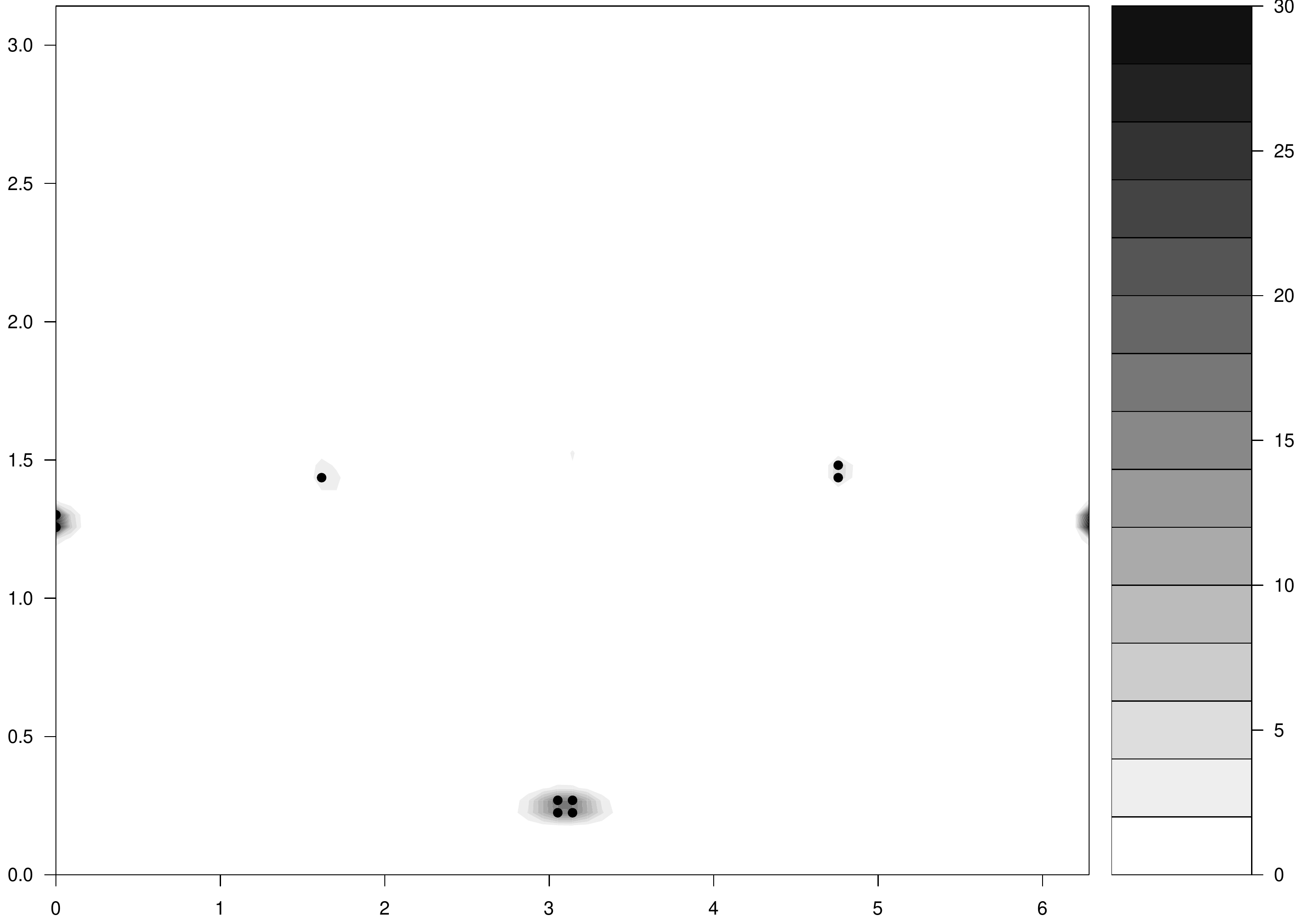}
    \caption{A contour plot of the estimated mixing density on the $(\theta,\phi)$ grid, where the X-axis represents the azimuthal angle $\phi$ and the Y-axis represents the polar angle $\theta$. Black spots indicate the modes used for clustering on the sphere}
    \label{fig:latentdensity}
\end{figure}

\begin{figure}[t]
    \centering
    \subfloat[Original image]{{\includegraphics[width=7cm, height=7cm]{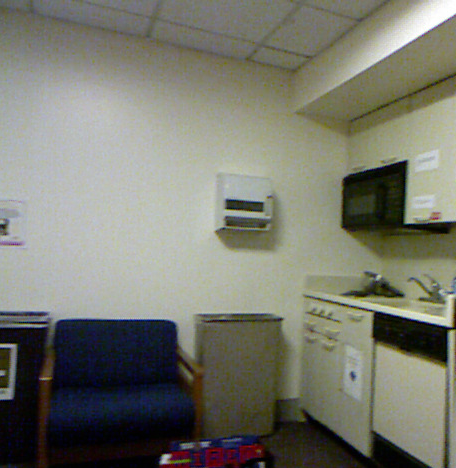}}}
    \qquad
    \subfloat[Clustering]{{\includegraphics[width=7cm, height=7cm]{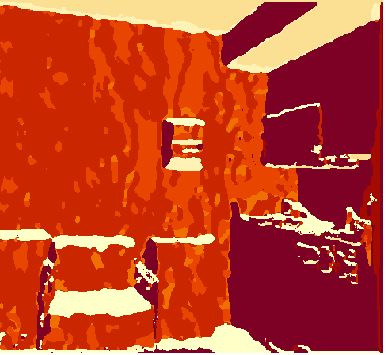}}}
    \caption{Original image and the reconstructed image based on the PR-based clustering.}%
    \label{fig:PRpointcloud}%
\end{figure}

\section{Summary and future work}
\label{S:discuss}

In this paper, we have proposed using the PR algorithm to estimate a smooth mixing density when modeling directional data by a mixture. Practical advantages of the PR algorithm include its speed and its flexibility, i.e., the algorithm does not require tailoring to a particular form of kernel as does the EM algorithm.  The PR estimator is consistent under suitable conditions, and we show that these assumptions hold for both Schladitz and von Mises--Fisher kernels.  Further, we show that any additional structural parameters associated with the kernel can be estimated via maximizing the PR-based marginal likelihood. We show the efficiency of this method using both vectorial and axial models.  Estimates of the mixing density can be used to answer various questions posed for mixtures on a sphere.  We considered two such problems, namely, goodness-of-fit testing and clustering, and proposed novel PR-based methods to address them.  


There are several open questions concerning both mixtures of spherical distributions and the PR algorithm.  First, to our knowledge, identifiability of general (non-finite) mixtures of spherical kernels has not been established.  That finite versions of these mixtures have, at least in some cases, been shown to be identifiable \citep[e.g.,][]{kent1983} gives us every reason to expect that identifiability holds for general mixtures too, but the details still need to be worked out.  Second, general asymptotic properties of the maximum PR marginal likelihood estimator described in Section~\ref{SS:prml} are lacking.  We fully expect consistency, asymptotic normality, etc., to hold under suitable conditions but, again, the details remain to be worked out.


\section*{Acknowledgments}

This work is partially supported by the U.S.~National Science Foundation, grant DMS--1737929.  The authors thank Prof.~Dr.~Claudia Redenbach for sharing her code for fitting mixtures of Schladitz distributions.  

\appendix

\section{Appendix}
\label{appendix}

Theorem~\ref{thm:pr} states the asymptotic convergence properties of the PR estimates of the mixing and mixture distributions.  Part~1 of the theorem concerns the convergence properties of the PR estimate of the mixture density, and below we state the precise conditions required for that theorem and show that these are satisfied for the kinds of mixture models considered here in this paper.  Part~2 of the theorem concerns the properties require an extra identifiability assumption, which we will not discuss here; see Section~\ref{S:discuss}.  

Recall that $f^\star$ is the true density for the iid data, which may or may not be of the specified mixture model form in \eqref{eq:mixture}.  There are several general conditions---mostly about $f^\star$ and the kernel $k$---stated in \citet{tmg} and/or \citet{martintokdar2009}, some of which are more-or-less automatic in the present spherical mixtures setting.  First, when the mixture model is misspecified, Condition~A1 in \citet{martintokdar2009} requires that the set of all mixing distributions---absolutely continuous with respect to the specified dominating measure---be (relatively) compact.  When the support of the mixing distribution is compact, as it is in our present setting, then this follows from Prokhorov's theorem.  Condition~A2 requires that the kernel $x \mapsto k(y \mid x)$ be bounded and continuous for each $y$, which is easy to check for Schladitz and von Mises--Fisher kernels.  Except for the decay condition on the user-specified weight sequence, as discussed in Section~\ref{SS:algorithm}, there is one more non-trivial condition required for Part~1 of the theorem, namely, the following (integrability) Condition~A4 in \citet{martintokdar2009}:
\begin{equation}
\label{eq:integrable}
\sup_{x_1, x_2 \in \SS} \int \Bigl\{ \frac{k(y \mid x_1)}{k(y \mid x_2)} \Bigr\}^2 \, f^\star(y) \, \sigma(dy) < \infty. 
\end{equation}
We will check this below, in Propositions~1--2, for the Schladitz and von Mises--Fisher kernels, respectively.  For that, we should mention that there is a Condition~A6 in \citet{martintokdar2009} that is generally required for convergence of the PR estimate of the mixing distribution.  This boils down to a sort of tightness on the class of kernels, which follows immediately by compactness of the sphere.  

\begin{prop}
Condition \eqref{eq:integrable} is satisfied by the the Schladitz kernel for every $f^\star$.  
\end{prop}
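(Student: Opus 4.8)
The plan is to observe that the bound \eqref{eq:integrable} follows at once once we establish that the Schladitz kernel is bounded above and bounded below away from zero, uniformly over the location parameter and the argument: for each fixed $\beta$ there are constants $0 < c_\beta \le C_\beta < \infty$ with $c_\beta \le k_\beta(y \mid x) \le C_\beta$ for all $x,y \in \SS$. Granting this, $k_\beta(y \mid x_1)/k_\beta(y \mid x_2) \le C_\beta/c_\beta$ uniformly in $(y,x_1,x_2)$, so the integrand in \eqref{eq:integrable} is dominated by the constant $(C_\beta/c_\beta)^2$; since $f^\star$ integrates to one against $\sigma$, the whole integral is at most $(C_\beta/c_\beta)^2 < \infty$, with a bound independent of $f^\star$.

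So the work is entirely in the uniform two-sided bound on $k_\beta(\cdot \mid \cdot)$. First, $|\Sigma_{\mu,\beta}| = |D_\beta| = \beta^{-2}$ because $Q_\mu$ is a rotation matrix, so the determinant factor in \eqref{schladitz} is the constant $\beta$, free of $\mu$. Next, writing $\Sigma_{\mu,\beta}^{-1} = Q_\mu^\top D_\beta^{-1} Q_\mu$ with $D_\beta^{-1} = \mathrm{diag}(1,1,\beta^2)$ and setting $z = Q_\mu y$, which is again a unit vector, we get $y^\top \Sigma_{\mu,\beta}^{-1} y = z_1^2 + z_2^2 + \beta^2 z_3^2 = 1 + (\beta^2-1)z_3^2$. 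Since $z_3^2 \in [0,1]$, this quadratic form lies in $[\min(1,\beta^2), \max(1,\beta^2)]$ for every $y$ and every $\mu$. Substituting these two facts into \eqref{schladitz} gives
\[ \frac{\beta}{4\pi}\,\max(1,\beta^2)^{-3/2} \;\le\; k_\beta(y \mid \mu) \;\le\; \frac{\beta}{4\pi}\,\min(1,\beta^2)^{-3/2}, \]
which is exactly the desired uniform sandwich, with explicit $c_\beta$ and $C_\beta$.

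I do not anticipate a genuine obstacle; the only point requiring care is the bookkeeping with $Q_\mu$, namely that it is orthogonal (so $Q_\mu y$ remains on $\SS$ and the Rayleigh-quotient argument applies to the fixed eigenvalues $\{1,1,\beta^2\}$ of $D_\beta^{-1}$) and that these eigenvalues, hence the constants $c_\beta$ and $C_\beta$, depend on $\beta$ only. The same computation shows the uniformity degrades as $\beta \to 0$ or $\beta \to \infty$, but for any fixed admissible $\beta$ the constants are finite, which is all \eqref{eq:integrable} requires.
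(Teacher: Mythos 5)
Your proposal is correct and follows essentially the same route as the paper's proof: the determinant factor is $\mu$-free because $Q_\mu$ is a rotation, and the quadratic form $y^\top \Sigma_{\mu,\beta}^{-1}y$ is sandwiched between the extreme eigenvalues of $\mathrm{diag}(1,1,\beta^2)$ since $\|Q_\mu y\|=1$, giving a uniform bound on the ratio and hence on the integral against $f^\star$. Your version merely makes the constants explicit (via $1+(\beta^2-1)z_3^2$) by bounding the kernel itself above and below, which is a harmless refinement of the same argument.
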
 

\begin{proof}
When $\beta > 0$ is fixed, the Schladitz kernel can be written as 
\[ k(y \mid x) \propto |\Sigma_x|^{-1/2} (y^\top \Sigma_x^{-1} y)^{-3/2}, \]
where $\Sigma_x = Q_x^\top D_\beta Q_x$, with  $D_\beta=\text{diag}(1,1,\beta^{-2})$ and $Q_x$ is the rotation matrix that maps the unit vector $(0,0,1)^\top$ onto $x$.  Plugging the kernel density, at two generic unit vectors $x_1$ and $x_2$, into the left-hand side of \eqref{eq:integrable}, we get 
\begin{align*}
\int \Bigl\{ \frac{k(y \mid x_1)}{k(y \mid x_2)} \Bigr\}^2 \, f^\star(y) \, \sigma(dy) & = \int \frac{|\Sigma_{x_2}|}{|\Sigma_{x_1}|} \Bigl( \frac{y^\top \Sigma_{x_2}^{-1} y}{y^\top \Sigma_{x_1}^{-1} y} \Bigr)^3 \, f^\star(y) \, \sigma(dy) \\
& = \frac{|\Sigma_{x_2}|}{|\Sigma_{x_1}|} \int \Bigl( \frac{y^\top \Sigma_{x_2}^{-1} y}{y^\top \Sigma_{x_1}^{-1} y} \Bigr)^3 \, f^\star(y) \, \sigma(dy).
\end{align*}
Since $Q_x$ is a rotation matrix for all $x$, and since $\beta$ is fixed, the ratio in front of the integral in the last expression is 1.  Note also that
\[ e_{\text{min}}(D_\beta) \|Q_x y\|^2 \leq (Q_x y)^\top D_\beta (Q_x y) \leq e_{\text{max}}(D_\beta) \|Q_x y\|^2, \]
where $e_{\text{min}}(D_\beta)$ and $e_{\text{max}}(D_\beta)$ are the minimum and maximum eigenvalues of $D_\beta$, which are fixed numbers, and that $\|Q_xy\|^2 = \|y\|^2$.  Then we immediately see that the ratio of quadratic forms is uniformly bounded in both $(x_1,x_2)$ and $y$, i.e., 
\[ \Bigl( \frac{y^\top \Sigma_{x_2}^{-1} y}{y^\top \Sigma_{x_1}^{-1} y} \Bigr)^3 \leq \text{constant}. \]
Therefore, the expectation with respect to $f^\star$ is also bounded, hence \eqref{eq:integrable} holds. 
\end{proof}

\begin{prop}
If $f^\star$ admits a continuous moment generating function 
\[ t \mapsto \int \exp(t^\top y) \, f^\star(y) \, \sigma(dy), \]
then condition \eqref{eq:integrable} is satisfied by the von Mises--Fisher kernel.  
\end{prop}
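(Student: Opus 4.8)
The plan is to exploit the exponential-family form of the von Mises--Fisher kernel so that the ratio appearing in \eqref{eq:integrable} collapses to an exponential of a linear function of $y$, which is precisely what the moment generating function of $f^\star$ controls.

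First I would fix the concentration parameter $\kappa$ and recall from \eqref{eq:vMF} that the normalizing constant of $k_\kappa(\cdot \mid \mu)$ depends on $\kappa$ but not on the location $\mu$. Hence, for two generic unit vectors $x_1, x_2 \in \SS$,
\[ \frac{k(y \mid x_1)}{k(y \mid x_2)} = \exp\bigl(\kappa\,(x_1 - x_2)^\top y\bigr), \]
so the integrand on the left-hand side of \eqref{eq:integrable} is $\exp\bigl(2\kappa\,(x_1-x_2)^\top y\bigr)\,f^\star(y)$. Integrating over the sphere, the left-hand side of \eqref{eq:integrable} equals $M^\star\bigl(2\kappa\,(x_1-x_2)\bigr)$, where $M^\star(t) = \int \exp(t^\top y)\,f^\star(y)\,\sigma(dy)$ is the moment generating function of $f^\star$ posited in the statement.

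Next I would observe that, as $x_1, x_2$ range over $\SS$, the argument $t = 2\kappa(x_1-x_2)$ stays in the closed ball of radius $4\kappa$ in $\RR^3$ by the triangle inequality and $\|x_i\| = 1$; this ball is compact. Since $M^\star$ is continuous by hypothesis (in fact, because $\|y\| = 1$ on $\SS$ one has $M^\star(t) \le e^{\|t\|} < \infty$ for all $t$, and continuity then follows by dominated convergence), $M^\star$ attains a finite maximum over that compact set. Therefore
\[ \sup_{x_1, x_2 \in \SS} \int \Bigl\{ \frac{k(y \mid x_1)}{k(y \mid x_2)} \Bigr\}^2 \, f^\star(y) \, \sigma(dy) = \sup_{\|t\| \le 4\kappa} M^\star(t) < \infty, \]
which is \eqref{eq:integrable}.

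There is essentially no hard step here: the only points requiring care are that the range of the argument of $M^\star$ is bounded (it is, by the triangle inequality) and that $M^\star$ is finite and continuous on that range, which the hypothesis---indeed, compactness of $\SS$ itself---guarantees. The rest is bookkeeping, and in particular no delicate uniform-in-$(x_1,x_2)$ estimate is needed beyond compactness.
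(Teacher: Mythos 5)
Your argument is correct and follows essentially the same route as the paper's proof: cancel the $\mu$-free normalizing constant, recognize the squared ratio integral as the moment generating function of $f^\star$ evaluated at $t = 2\kappa(x_1 - x_2)$, and use compactness of the range of $t$ together with continuity of the moment generating function to bound the supremum. Your parenthetical observation that compactness of $\SS$ (i.e., $\|y\|=1$) already forces the moment generating function to be finite and continuous is a nice extra remark showing the hypothesis is essentially automatic, but it does not change the substance of the argument.
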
 

\begin{proof}
Plugging in the von Mises--Fisher kernel and simplifying gives 
\begin{align*}
\int \Bigl\{ \frac{k(y \mid x_1)}{k(y \mid x_2)} \Bigr\}^2 \, f^\star(y) \, \sigma(dy) & = \int \frac{C_3(\kappa) \exp(2\kappa x_1^\top y)}{C_3(\kappa) \exp(2\kappa x_2^\top y)} \, f^\star(y) \, \sigma(dy) \\
& = \int \exp\{2 \kappa (x_1 - x_2)^\top y\} \, f^\star(y) \, \sigma(dy). 
\end{align*}
The last expression is just the moment generating function of $f^\star$ evaluated at $t=2\kappa(x_1-x_2)$.  Since $\kappa$ is fixed, and $x_1$ and $x_2$ are on the sphere, this argument $t$ ranges over a compact set.  Since the moment generating function is assumed to be continuous, it must attain its (finite) maximum on that compact, hence \eqref{eq:integrable}.
\end{proof}

\bibliographystyle{apalike}
\bibliography{refs}

\end{document}